\newtheorem{theorem}{Theorem}
\newtheorem{proposition}[theorem]{Proposition}
\newcommand{\Ck}{\mathbb{C}_k}
\newcommand{\id}{\mathsf{id}}
\newcommand{\ie}{\textit{i.e.}~}
\newcommand{\Fraisse}{Fra\"{i}ss\'{e}~}
\newcommand{\Ek}{\mathbb{E}_{k}}
\newcommand{\Pk}{\mathbb{P}_{k}}
\newcommand{\vsa}{\vspace{.1in}}
\newcommand{\Struct}{\textsf{Structure}}
\newcommand{\Power}{\textsf{Power}}
\newcommand{\As}{\mathscr{A}}
\newcommand{\Bs}{\mathscr{B}}
\newcommand{\sg}{\sigma}
\newcommand{\RA}{R^{\As}}
\newcommand{\RB}{R^{\Bs}}
\newcommand{\IMP}{\; \Rightarrow \;}
\newcommand{\CS}{\mathcal{R}(\sg)}
\newcommand{\preford}{\sqsubseteq}
\newcommand{\IFF}{\Longleftrightarrow}
\newcommand{\rarr}{\rightarrow}
\newcommand{\Count}{\#}
\newcommand{\LL}{\mathcal{L}}
\newcommand{\Lk}{\mathcal{L}_k}
\newcommand{\Lck}{\mathcal{L}_{k}(\Count)}
\newcommand{\ELk}{\exists\mathcal{L}_{k}}
\newcommand{\eqLk}{\equiv^{\Lk}}
\newcommand{\eqELk}{\equiv^{\ELk}}
\newcommand{\eqLck}{\equiv^{\Lck}}
\newcommand{\KK}{\mathcal{K}}
\newcommand{\eqL}{\equiv^{\LL}}
\newcommand{\vphi}{\varphi}
\newcommand{\iffdef}{\;\; \stackrel{\Delta}{\IFF} \;\;}
\newcommand{\GG}{\mathsf{G}}
\newcommand{\Kl}{\mathsf{Kl}}
\newcommand{\epsA}{\varepsilon_{\As}}
\newcommand{\eqak}{\rightleftarrows_{k}}
\newcommand{\eqbk}{\leftrightarrow_{k}}
\newcommand{\eqck}{\cong_{\Kl(\Ck)}}
\newcommand{\eqaCk}{\rightleftarrows_{k}^{\mathbb{C}}}
\newcommand{\eqbCk}{\leftrightarrow_{k}^{\mathbb{C}}}
\newcommand{\eqcCk}{\cong_{k}^{\mathbb{C}}}
\newcommand{\Alk}{A^{\leq k}}
\newcommand{\Blk}{B^{\leq k}}
\newcommand{\REk}{R^{\Ek \As}}
\newcommand{\Mk}{\mathbb{M}_{k}}
\newcommand{\Ralph}{R_{\alpha}}
\newcommand{\RMA}{R^{\Mk (\As, a)}}
\newcommand{\kset}{\mathbf{k}}
\newcommand{\elen}{\exists_{\leq n}}
\newcommand{\egen}{\exists_{\geq n}}
\newcommand{\pow}{\mathscr{P}}
\newcommand{\comp}{{\uparrow}}
\newcommand{\adj}{\frown}
\newcommand{\hgt}{\mathsf{ht}}
\newcommand{\td}{\mathsf{td}}
\newcommand{\tw}{\mathsf{tw}}
\newcommand{\Gf}{\mathcal{G}}
\newcommand{\cnE}{\kappa^{\mathbb{E}}}
\newcommand{\cnP}{\kappa^{\mathbb{P}}}
\newcommand{\lbfn}{\lambda}
\newcommand{\pth}{\mathsf{path}}
\newcommand{\es}{\varnothing}
\newcommand{\WAB}{\mathsf{W}_{\As,\Bs}}
\newcommand{\SAB}{\mathcal{S}(\As,\Bs)}
\newcommand{\SBA}{\mathcal{S}(\Bs,\As)}
\newcommand{\rcvr}{\succ}
\newcommand{\arX}{\rightarrow_{X}}
\newcommand{\arnX}{\rightarrow_{X}^{n}}
\newcommand{\barX}{\rightleftarrows_{X}}
\newcommand{\barnX}{\rightleftarrows_{X}^{n}}
\newcommand{\barnK}{\rightleftarrows_{K}^{n}}
\newcommand{\barn}{\rightleftarrows^{n}}
\newcommand{\rln}{\rightleftarrows^n}
\newcommand{\rla}{\rightleftarrows}
\newcommand{\emp}{\varnothing}
\newcommand{\emb}{\rightarrowtail}
\newcommand{\poK}{+_{K}}
\newcommand{\Approxn}{\mathsf{Approx}_{n}}
\newcommand{\EC}{\Theta}
\newcommand{\iKCh}{i^{K}_{C, h}}
\newcommand{\CSn}{\mathcal{R}_{\sg}^{n}}
\begin{document}

\title{Whither Semantics?}
\author{Samson Abramsky\thanks{samson.abramsky@cs.ox.ac.uk}~}
\affil{Department of Computer Science, University of Oxford}
\date{}
\maketitle

\begin{abstract}
We discuss how mathematical semantics has evolved, and suggest some new directions for future work.
As an example, we discuss some recent work on encapsulating model comparison games as comonads, in the context of finite model theory.
\end{abstract}

\paragraph{Keywords} Mathematical semantics, finite model theory, model-theoretic games, category theory, comonads

\section{Introduction}

Maurice Nivat was one of the founding fathers of  Theoretical Computer Science in Europe, both through his scientific achievements, and his community-building work, including establishing the Journal of Theoretical Computer Science, and as one of the founders of the European Association for Theoretical Computer Science, and the ICALP conference. The very name of ICALP -- International Colloquium on Automata, Languages and Programming -- shows the two main scientific lines to which Nivat himself made pioneering contributions: formal languages and automata on the one hand, and semantics of programming languages on the other.  Having started his scientific career in the former, Nivat had the breadth of interest and perspective to see the importance and depth of the work on mathematical semantics of programming languages of Dana Scott, Robin Milner, Rod Burstall, \textit{et al.} 
He founded a French school of algebraic semantics of recursive program schemes, and did much to create the conditions and possibilities which allowed for the development by others of work in $\lambda$-calculus, type theory, and programming language semantics.                                

In my own entry to the programming languages world, I encountered Nivat's work, including \cite{nivat1979infinite}, which gives an elegant introduction to how infinite objects arise in the semantics of computation, and the mathematical theory around them. Another paper \cite{nivat1980non}, which discusses the semantics of non-deterministic program schemes, highlighting some key issues with striking examples, directly influenced my  early work in \cite{abramsky1983semantic}.

On a personal level, at an early stage in my career, Maurice invited me to be an editor of TCS. I believe that he similarly encouraged and supported many young researchers.

In this short essay, I would like to look at some broad features of how the field of semantics has developed, and to ask some questions about where it is, or should be, going. This will lead to a brief overview of some current work. 

\section{The evolving project of semantics}

The purposes of programming language semantics, as originally conceived, can be summarized as follows:
\begin{itemize}
\item To give a mathematically precise, implementation independent specification of the meanings of programs.
\item This can then serve as a basis for, e.g.
\begin{itemize}
\item soundness of program logics
\item soundness of type systems
\item proving program equivalences
\item compiler correctness.
\end{itemize}

\end{itemize}

The first ``formal semantics'' were given in terms of abstract machines \cite{wegner1972vienna,landin1965abstract}.
Then denotational semantics, based on domain theory, appeared -- \emph{mathematical semantics} (Scott and Strachey) \cite{scott1970outline,scott1971toward}.
The contrast between this ``mathematical semantics'' and abstract machine-based semantics was rather clear.

Then ``Structural Operational Semantics'' (Plotkin) \cite{plotkin1981structural}, ``Natural Semantics'' (Kahn) \cite{kahn1987natural}, and their variants appeared on the scene.

A structural operational  semantics is given by a syntax-directed inductive definition of program behaviour, expressed in terms of \emph{directly operationally meaningful} data, and \emph{without} invoking abstract mathematical spaces (e.g. function spaces).
It is formally rigorous, and can serve as a mathematically precise specification of a programming language.

It should be said that this form of semantics is now probably the ``industry standard''  in the community doing formal verification of programming language properties and directly addressing the objectives we listed above, as represented in conferences such as ACM POPL.

Much of this activity, and the community doing it, is rather disjoint from the ongoing work in mathematical semantics.
Is it then time to reconsider the purposes of mathematical semantics, and hence where it should be going?

\subsection{Mathematical Semantics}

The term ``mathematical semantics'', which may seem old-fashioned, is preferable, for the purposes of our discussion, to ``denotational semantics'', which is perhaps too tied to set-theoretic forms of semantics. Various forms of categorical semantics offer \emph{structures} rather than denotations as their primary features.

Our question can be posed as follows: given that many of the original purposes of semantics, as listed above, are very adequately fulfilled by various forms of structural operational semantics, what is the compelling motivation for mathematical semantics?
This is a rather fundamental question, yet an answer is hard to give in precise terms! 

It may be useful to illustrate this general question with a classic example: the untyped $\lambda$-calculus \cite{barendregt1984lambda}.
It was known since the 1930's, by the Church-Rosser theorem, that the calculus is consistent.
We can express this by the construction of a non-trivial term model. 
On the other hand, the introduction of domain-theoretic models of the $\lambda$-calculus by Scott in 1969 revolutionized the subject. 
We can ask: what did these models add?

In our view, the most important feature of domain theory is that it provides a \emph{semantic paradigm}, which exists independently of any particular language or calculus. 
This paradigm has proved extremely fruitful, leading to many new questions and developments.

\subsection{A vision for mathematical semantics?}

More generally, we can say that mathematical semantics can lead to new semantic paradigms.
These can lead to new languages, logics, tools, new kinds of questions, new structural theories.
For an early example in the field of untyped $\lambda$-calculus, one can refer to the classic treatise \cite{barendregt1984lambda}. For recent examples, see e.g.~\cite{vakar2019domain,tasson2018models}.

This in turn, suggests a  \emph{semantics dilemma}: 
\begin{center}
\fbox{To follow or to lead?}
\end{center}
That is, should mathematical semantics still be conceived as following in the track of pre-existing languages, trying to explain their novel features, and to provide firm foundations for them? Or should it be seen as operating in a more autonomous fashion, developing new semantic paradigms, which may then give rise to new languages?

Indeed, a number of discussions by pioneers of the field relate to this question:
\begin{itemize}
\item Robin Milner gave an invited lecture at MFPS in 2000, in which he addressed exactly this point, and proposed the term ``Information Dynamics''
as a possible substitute for ``semantics'', to emphasize this autonomous character of the study of mathematical structures in computation.
Unfortunately, we have not been able to find any published record of his lecture.

\item The following quotation from Tony Hoare \cite{hoare1980model} beautifully expresses the point:
\begin{quote}
\textit{The primary objective of this paper is to give a simple
mathematical model for communicating sequential processes.}

\textit{As the exposition unfolds, the examples begin to look like programs,
and the notations begin to look like a programming language.
Thus the design of a language seems to emerge naturally from its
formal definition, in an intellectually pleasing fashion.}
\end{quote}

\item Finally, a tantalizing footnote from a fascinating (and too little known) paper by Peter Landin \cite{landin1969program} (see \cite{DBLP:journals/corr/Abramsky14d} for a discussion relating this paper to later developments):
\begin{quote}
\textit{For some years I have aspired to syntax-free programming \ldots}
\end{quote}
\end{itemize}

\subsection{Semantic Dreams}

In this brief essay, we would like to suggest a still bolder and more expansive vision for mathematical semantics: a wider application of its paradigms and tools, both within Computer Science and beyond:
\begin{itemize}
\item Beyond Computer Science, there are many opportunities to use the compositional methods of semantics to great effect: we refer for examples to ongoing work in categorical quantum mechanics \cite{abramsky2004categorical,abramsky2008categorical}, the sheaf-theoretic approach to contextuality and its applications \cite{abramsky2011sheaf,abramsky2015contextuality}, and work in linguistics \cite{abramsky2014semantic}, game theory \cite{abramsky2017coalgebraic,ghani2018compositional}, networks and signal-flow graphs \cite{baez2017network,bonchi2014categorical}, etc.

\item Within Computer Science itself, we see this expansive view of semantics as potentially contributing to a unification of our field, across a major current divide. This will be the topic of the remainder of this article.
\end{itemize}

\section{Structure \textit{vs} Power: The Great Divide}
The logical foundations of computer science have undergone immense development over the past seven decades. However, the current state of the art shows a remarkable \emph{great divide}, into two  streams:
\begin{itemize}
\item the connections between logic and algorithms and complexity, studying topics such as logic and automata, finite model theory, descriptive complexity, database theory and constraint satisfaction
\item the study of semantics and type theory, using tools from category theory, $\lambda$-calculus and proof theory.
\end{itemize}
These form distinct bodies of work, largely pursued by disjoint communities, with different methods, concepts and technical languages, and with little communication, or even mutual comprehension, between them.
Some pieces of work bring these streams together, but they are rare.

Each of these streams reflects and informs important areas of more applied work. Semantics and type theory have strongly influenced the development of modern programming languages and their associated tools, and also of interactive theorem provers and proof assistants; while work in databases, computer-assisted verification, constraint satisfaction etc.~shows the influence of the study of logic in relation to algorithms and complexity.

This dichotomy reflects two very different views of what the fundamental features of computation are: one focussing on \emph{structure} and \emph{compositionality}, the other on \emph{expressiveness} and \emph{efficiency}. For brevity, we can call these \Struct~and \Power.

\begin{description}
\item[Structure] Compositionality and semantics, addressing the question of mastering the complexity of computer systems and software.

\item[Power] Expressiveness and complexity,
addressing the question of how we can harness the power of computation and recognize its limits.
\end{description}

A \emph{shocking fact}: the current state of the art is almost \emph{disjoint communities} of researchers studying \Struct~and \Power~respectively, with no common technical language or tools.
This is a major obstacle to fundamental progress in Computer Science.

We can make an analogy (emphasizing that it is \emph{only} an analogy) with the Grothendieck program in algebraic geometry. The (very abstract) tools developed there were ultimately critical for concrete results, e.g.~the Wiles proof of the Fermat theorem.

It is instructive to quote one renowned number theorist, quoting another --
Mazur quoting Lenstra \cite{mazurlen1997}:
\begin{quote}
\textit{twenty years ago he was firm in his conviction that he DID want to solve Diophantine equations, and that he DID NOT wish to represent functors -- and now he is amused to discover himself representing functors in order to solve Diophantine equations!}
\end{quote}

The dream is to use \emph{structural methods} to solve \emph{hard problems}.
Can it be done? Only if we are bold enough to try!

\subsection*{A case study}

As a case study for this theme, we shall give a brief overview of some recent work on relating categorical semantics, which exemplifies \Struct, to finite model theory, which exemplifies \Power.  This is based on the papers \cite{abramsky2017pebbling,DBLP:conf/csl/AbramskyS18}, and ongoing work with Nihil Shah and Tom Paine.
Readers wishing to see a more detailed account are referred to \cite{abramsky2017pebbling,DBLP:conf/csl/AbramskyS18}.
In particular, the presentation in \cite{abramsky2017pebbling}, while detailed, should be accessible to readers with minimal background in category theory.

\section{The setting}

Relational structures and the homomorphisms between them play a fundamental r\^{o}le in finite model theory, constraint satisfaction and database theory. The existence of a homomorphism $A \rarr B$ is an equivalent formulation of constraint satisfaction, and also equivalent to the preservation of existential positive sentences \cite{chandra1977optimal}. This setting also generalizes what has become a central perspective in graph theory \cite{hell2004graphs}.

A relational vocabulary $\sg$  is a family of relation symbols $R$, each of some arity $n > 0$.
A relational structure for $\sg$ is $\As = (A, \{ \RA \mid R \in \sg\}))$, where $\RA \subseteq A^n$ for $R \in \sg$ of arity $n$.
A homomorphism of $\sg$-structures $f : \As \rarr \Bs$ is a function $f : A \rarr B$ such that, for each relation $R \in \sg$ of arity $n$ and $(a_1, \ldots , a_{n}) \in A^{n}$:
\[  (a_1, \ldots , a_{n}) \in \RA \IMP (f(a_1), \ldots , f(a_n))) \in \RB . \]
Our setting will be $\CS$, the category of relational structures and homomorphisms.

\subsection*{Model theory and deception}
In a sense, the purpose of model theory is ``deception''.  It allows us to see structures not ``as they really are'', \ie up to isomorphism, but only up to \emph{definable properties}, where definability is relative to a logical language $\LL$. The key notion is \emph{logical equivalence} $\eqL$. Given structures $\As$, $\Bs$ over the same vocabulary:
\[ \As \eqL \Bs \iffdef \forall \vphi \in \LL. \; \As \models \vphi \; \IFF \; \Bs \models \vphi . \]
If a class of structures $\KK$ is definable in $\LL$, then it must be saturated under $\eqL$. Moreover, for a wide class of cases of interest in finite model theory, the converse holds \cite{kolaitis1992infinitary}.

The idea of syntax-independent characterizations of logical equivalence is quite a classical one in model theory, exemplified by the Keisler-Shelah theorem \cite{shelah1971every}.
It acquires additional significance in finite model theory, where model comparison games such as Ehrenfeucht-\Fraisse (EF)-games, pebble games and bisimulation games play a central role \cite{Libkin2004}.

The EF-game between $\As$ and $\Bs$ is played as follows. In the $i$'th round, Spoiler moves by choosing an element in $A$ or $B$; Duplicator responds by choosing an element in the other structure. Duplicator wins after $k$ rounds if the relation $\{ (a_i, b_i) \mid 1 \leq i \leq k \}$ is a partial isomorphism.

In the existential EF-game, Spoiler only plays in $\As$, and Duplicator responds in $\Bs$. The winning condition is that the relation is a partial homomorphism.

The Ehrenfeucht-\Fraisse theorem says that a winning strategy for Duplicator in the $k$-round EF game characterizes the equivalence $\eqLk$, where $\Lk$ is the fragment of first-order logic of formulas with quantifier rank $\leq k$.

Similarly, there are $k$-pebble games, and bismulation games played to depth $k$.

Pebble games are similar but subtly different to EF-games.
Spoiler moves by placing one from a fixed set of pebbles on an element of $\As$ or $\Bs$; Duplicator responds by placing their matching pebble on an element of the other structure.
Duplicator wins if after each round, the relation defined by the current positions of the pebbles is a partial isomorphism
Thus there is a ``sliding window'' on the structures, of fixed size. It is this size which bounds the resource, not the length of the play.

Whereas the $k$-round EF game corresponds to bounding the quantifier rank, $k$-pebble games correspond to bounding the number of variables which can be used in a formula.
Just as for EF-games, there is an existential-positive version, in which Spoiler only plays in $\As$, and Duplicator responds in $\Bs$.

Bisimulation games are localized to a current element of the universe (which is typically thought of as a ``world'' or a ``state'').
Spoiler can move from the current element in one structure to another element which is related to the current one by some binary relation (a ``transition''). Duplicator must respond with a matching move in the other structure.

\subsection*{A new perspective}
\begin{itemize}
\item We shall study these games, not as external artefacts, but as semantic constructions in their own right.
For each type of game $\GG$, and value of the resource parameter $k$, we shall define a corresponding construction~$\Ck$ on $\CS$. For each structure $\As$, we shall build a new structure $\Ck \As$. This new structure will represent the limited access to the underlying structure $\As$ which is available when playing the game with this level of resources.

\item The idea is that Duplicator strategies for the existential version of $\GG$-games from $\As$ to $\Bs$ will be recovered as homomorphisms $\Ck \As \to \Bs$.
Thus the notion of local approximation built into the  game is internalised into the category of $\sg$-structures and homomorphisms.

\item This leads to characterisations of a number of central concepts in
Finite Model Theory and combinatorics.
\end{itemize}

These constructions actually form \emph{comonads} on the category of relational structures. 
Monads and comonads are basic notions
of category theory which are widely used in semantics of
computation and in modern functional programming \cite{moggi1991notions,wadler1995monads,brookes1991computational,uustalu2008comonadic}.
We show that model-comparison games have a natural
comonadic formulation.

\subsection*{The EF comonad}

Given a structure $\As$, the universe of $\Ek \As$ is $\Alk$, the  non-empty sequences of length $\leq k$.
The counit map $\epsA : \Ek \As \to \As$ sends a sequence $[a_1, \ldots , a_n]$ to  $a_n$.

The key question is: how do we lift the relations on $\As$ to $\Ek \As$?

Given e.g.~a binary relation $R$, we define $\REk$ to the set of pairs $(s, t)$ such that
\begin{itemize}
\item $s \preford t$ or $t \preford s$ (in prefix order)
\item $\RA(\epsA(s), \epsA(t))$.
\end{itemize}
More generally, for each relation symbol $R$ of arity $n$, we define $\REk$ to be the set of $n$-tuples $(s_1, \ldots , s_n)$ of sequences which are pairwise comparable in the prefix ordering, and such that $\RA(\epsA s_1, \ldots , \epsA s_n)$.

\noindent
Given a homomorphism $f : \Ek \As \rarr \Bs$, we define the coextension $f^* : \Alk \rarr \Blk$ by 
\[ f^* [a_1, \ldots , a_j ] = [b_1, \ldots , b_j] , \]
where $b_i = f [a_1, \ldots , a_i]$, $1 \leq i \leq j$.

This is easily verified to yield a comonad on $\CS$ \cite{DBLP:conf/csl/AbramskyS18}.

Intuitively, an element of $\Alk$ represents a play in $\As$ of length $\leq k$. 
A coKleisli morphism $\Ek \As \rarr \Bs$ represents a Duplicator strategy for the existential Ehrenfeucht-\Fraisse game with $k$ rounds:
Spoiler plays only in $\As$, and $b_i = f [a_1, \ldots , a_i]$ represents Duplicator's response in $\Bs$ to the $i$'th move by Spoiler. 
 
The winning condition for Duplicator in this game is that, after $k$ rounds have been played, 
the induced relation $\{ (a_i, b_i) \mid 1 \leq i \leq k \}$ is a partial homomorphism from $\As$ to $\Bs$.

\begin{theorem}
\label{EFgamethm}
The following are equivalent:
\begin{enumerate}
\item There is a homomorphism $\Ek \As \rarr \Bs$.
\item Duplicator has a winning strategy for the existential Ehrenfeucht-\Fraisse game with $k$ rounds, played from $\As$ to $\Bs$.
\item For every existential positive sentence $\vphi$ with quantifier rank $\leq k$, $\As \models \vphi \IMP \Bs \models \vphi$.
\end{enumerate}
\end{theorem}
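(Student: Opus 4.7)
The plan is to prove $(1) \IFF (2)$ directly from the comonadic construction of $\Ek$, and then appeal to the classical Ehrenfeucht-\Fraisse theorem (in its existential-positive form) for the equivalence $(2) \IFF (3)$. The entire burden of novelty sits in the first equivalence, which translates between the structural notion of homomorphism $\Ek \As \rarr \Bs$ and the game-theoretic notion of winning Duplicator strategy.

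For $(1) \IMP (2)$: given $f : \Ek \As \rarr \Bs$, I define Duplicator's response to Spoiler's moves $a_1, \ldots, a_i$ (with $i \leq k$) to be $b_i := f([a_1, \ldots, a_i])$. To verify this is winning, suppose after $k$ rounds some tuple $(a_{i_1}, \ldots, a_{i_n})$ lies in $\RA$ for a relation $R$ of arity $n$. Set $s_j := [a_1, \ldots, a_{i_j}]$; these sequences are pairwise prefix-comparable, as they are all prefixes of $[a_1, \ldots, a_k]$, and $\epsA(s_j) = a_{i_j}$, so by the definition of the lifted relation we have $(s_1, \ldots, s_n) \in \REk$. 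Applying the homomorphism gives $(b_{i_1}, \ldots, b_{i_n}) = (f(s_1), \ldots, f(s_n)) \in \RB$, which is precisely the partial-homomorphism condition.

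For $(2) \IMP (1)$: I fix a deterministic winning strategy $\sigma$ for Duplicator (determinism may be arranged by choice), and set $f([a_1, \ldots, a_j]) := \sigma(a_1, \ldots, a_j)$, namely Duplicator's response after Spoiler plays $a_1, \ldots, a_j$ in order. To check $f$ is a homomorphism, take $(s_1, \ldots, s_n) \in \REk$; since the $s_l$ are pairwise prefix-comparable they are linearly prefix-ordered, so there is a maximum $s_m = [a_1, \ldots, a_j]$ and each $s_l = [a_1, \ldots, a_{i_l}]$ is a prefix. Simulating Spoiler's play $a_1, \ldots, a_j$ against $\sigma$ yields responses $b_i$ with $f(s_l) = b_{i_l}$, and the winning condition gives that $\{(a_i, b_i) \mid 1 \leq i \leq j\}$ is a partial homomorphism. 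Hence $(\epsA s_1, \ldots, \epsA s_n) = (a_{i_1}, \ldots, a_{i_n}) \in \RA$ forces $(b_{i_1}, \ldots, b_{i_n}) \in \RB$, as required.

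The main delicate point is that the winning condition is formally stated after $k$ rounds, whereas the homomorphism check in $(2) \IMP (1)$ involves tuples of sequences that may all be strictly shorter than $k$. This is handled by observing that the partial-homomorphism condition is closed under restriction to any subset of the domain, so truncating a winning $k$-round strategy at round $j$ still yields a partial homomorphism on the first $j$ responses; one may extend any $j$-round play arbitrarily to a $k$-round play and then restrict. With $(1) \IFF (2)$ in hand, the equivalence with $(3)$ then follows from the standard back-and-forth argument for existential-positive formulas of quantifier rank at most $k$, which is classical (and reduces, in one direction, to induction on formulas and, in the other, to the observation that each round-$k$ Duplicator type is itself definable by an existential-positive sentence of rank $\leq k$).
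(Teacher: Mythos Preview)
Your proof is correct and follows exactly the correspondence the paper sketches in the paragraph preceding the theorem: coKleisli morphisms $f : \Ek\As \to \Bs$ are identified with Duplicator strategies via $b_i = f[a_1,\ldots,a_i]$, with the homomorphism condition matching the partial-homomorphism winning condition, and $(2)\IFF(3)$ deferred to the classical existential-positive Ehrenfeucht--\Fraisse theorem. The paper itself gives no formal proof beyond this sketch (referring to \cite{abramsky2017pebbling,DBLP:conf/csl/AbramskyS18} for details), so your argument is in fact a fuller version of what appears here.
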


\subsection*{The pebbling comonad}

Given a structure $\As$, the universe of $\Pk \As$ is 
$(\kset \times A)^{+}$, the set of finite non-empty sequences of moves $(p, a)$, where $\kset \, := \, \{1, \ldots , k\}$. Note this will be infinite even if $\As$ is finite.
It is shown in \cite{abramsky2017pebbling} that this is essential.

The counit map $\epsA : \Pk \As \to \As$ sends a sequence $[(p_1,a_1), \ldots , (p_n,a_n)]$ to  $a_n$.

Again, the key question is, how do we lift the relations on $\As$ to $\Ek \As$?

Given e.g.~a binary relation $R$, we define $R^{\Pk \As}$ to the set of pairs $(s, t)$ such that
\begin{itemize}
\item $s \preford t$ or $t \preford s$ 
\item If $s \preford t$, then the pebble index of the last move in s does not appear in the
suffix of s in t; and symmetrically if $t \preford s$.
\item $\RA(\epsA(s), \epsA(t))$.
\end{itemize}

Given a homomorphism $f : \Pk \As \rarr \Bs$, we define the coextension $f^* : \Pk \As \rarr \Pk \Bs$ by 
\[ f^* [(p_1,a_1), \ldots , (p_j,a_j) ] = [(p_1,b_1), \ldots , (p_j,b_j)] , \]
where $b_i = f [(p_1,a_1), \ldots , (p_i,a_i)]$, $1 \leq i \leq j$.

Again, this is easily verified to yield a comonad on $\CS$, yielding entirely analogous results to those for $\Ek$.

\subsection*{The modal comonad}

The flexibility of the comonadic approach is illustrated by showing that it also covers the well-known construction of unfolding a Kripke structure into a tree (``unravelling'').

For the modal case, we assume that the relational vocabulary $\sg$ contains only symbols of arity at most 2. 
We can thus regard a $\sigma$-structure as a Kripke structure for a multi-modal logic.
If there are no unary symbols, such structures are exactly the labelled transition systems.

Modal logic localizes its notion of satisfaction in a structure to a world. 
We reflect this by using the category of \emph{pointed relational structures} $(\As, a)$. 
Objects are pairs $(\As, a)$ where $\As$ is a $\sg$-structure and $a \in A$. Morphisms $h : (\As, a) \rarr (\Bs, b)$ are $h : \As \rarr \Bs$ such that $h(a) = b$.

For $k>0$ we  define a comonad $\Mk$, where $\Mk (\As, a)$ corresponds to unravelling the structure $\As$, starting from $a$, to depth $k$.
The universe of $\Mk (\As, a)$ comprises $[a]$, which is the distinguished element,  together with all sequences of the form $[a_0, \alpha_1, a_1, \ldots , \alpha_j, a_{j}]$, where $a = a_0$, $1 \leq j  \leq k$, and $\RA_{\alpha_i}(a_i, a_{i+1})$, $0 \leq i < j$. 
For binary relations $\Ralph$, the interpretation is $\RMA_{\alpha}(s,t)$ iff for some $a' \in A$, $t = s[\alpha, a']$.

Verification that this is a comonad proceeds analogously to the previous cases.

\section{Logical equivalences}

For each of our three types of game, there are corresponding  fragments $\Lk$ of first-order logic:
\begin{itemize}
\item For Ehrenfeucht-\Fraisse games, $\Lk$ is the fragment of quantifier-rank $\leq k$.
\item For pebble games, $\Lk$ is the $k$-variable fragment.
\item For bismulation games over relational vocabularies with symbols of arity at most 2, $\Lk$ is the modal fragment  with modal depth $\leq k$.
\end{itemize}

In each case, we write 
\begin{itemize}
\item $\ELk$ for the existential positive fragment of $\Lk$

\item $\Lck$ for the extension of $\Lk$ with counting quantifiers $\elen$, $\egen$
\end{itemize}

We can generically define two equivalences based on our indexed comonads $\Ek$:
\begin{itemize}
\item $\As \eqaCk \Bs$ iff there are coKleisli morphisms $\Ck \As \rarr \Bs$ and $\Ck \Bs \rarr \As$. Note that there need be no relationship between these morphisms.
\vsa
\item $\As \eqcCk \Bs$ iff $\As$ and $\Bs$ are isomorphic in the coKleisli category $\Kl(\Ck)$. 
This means that there are morphisms $\Ck \As \rarr \Bs$ and $\Ck \Bs \rarr \As$ which are inverses of each other in $\Kl(\Ck)$.
\end{itemize}

To complete the picture, we need to show how to define a back-and-forth equivalence  $\eqbk$ which characterizes $\eqLk$ \emph{purely in terms of coKleisli morphisms}.
Our solution to this,  while not completely generic, is general enough to apply to all our game comonads -- so we subsume EF equivalence, bisimulation equivalence and pebble game equivalence as instances of a single construction.

An interesting feature is that it can be described in terms of approximations and fixpoints. We use total coKleisli morphisms to approximate partial isomorphisms ``from above''.

The definition is parameterized on a set $\WAB \, \subseteq \, \Ck A \times \Ck B$ of ``winning positions'' for each pair of structures $\As$, $\Bs$.
We assume, as is the case with our concrete comonadic constructions, that $\Ck A$ has a tree structure, writing $s' \rcvr s$ if $s$ is the unique immediate predecessor of $s'$.

We define the back-and-forth $\Ck$ game between $\As$ and $\Bs$ as follows:
\begin{itemize}
\item At the start of each round of the game, the position is specified by $(s, t) \in \Ck A \times \Ck B$. The initial position is $(\bot, \bot)$.  
\item Either Spoiler chooses some $s' \rcvr s$, and Duplicator responds with $t' \rcvr t$, resulting in  $(s', t')$; or Spoiler chooses  $t'' \rcvr t$ and Duplicator responds with $s'' \rcvr s$, resulting in $(s'',t'')$. 
\item Duplicator wins after $k$ rounds  if the resulting position $(s, t)$ is in $\WAB$.
\end{itemize}

This is essentially \emph{bisimulation up to $\WAB$}. 
By instantiating $\WAB$ appropriately, we obtain the equivalences corresponding to the EF, pebbling and bisimulation games.

For example, $\WAB^{\Ek}$ is the set of all $(s, t)$ which define a partial isomorphism.

\subsection*{Characterization by coKleisli morphisms}

We  define $\SAB$ to be the set of all functions $f : \Ck A \to B$ such that, for all $s \in \Ck A$, $(s, f^*(s)) \in \WAB$. 

A \emph{locally invertible pair} $(F, G)$ from $\As$ to $\Bs$ is a pair of sets $F \subseteq \SAB$, $G \subseteq \SBA$, satisfying the following conditions:
\begin{enumerate}
\item For all $f \in F$, $s \in \Ck A$, for some $g \in G$, $g^* f^*(s) = s$.
\item For all $g \in G$, $t \in \Ck B$, for some $f \in F$, $f^* g^*(t) = t$.
\end{enumerate}
Note that $F = \es$ iff $G = \es$.

We define $\As \eqbCk \Bs$ iff there is a non-empty locally invertible pair from $\As$ to $\Bs$.

\begin{proposition}
\label{pisoprop}
The following are equivalent:
\begin{enumerate}
\item $\As \eqbCk \Bs$.
\item There is a winning strategy for Duplicator in the $\Ck$ game between $\As$ and $\Bs$.
\end{enumerate}
\end{proposition}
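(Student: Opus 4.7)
Given a non-empty locally invertible pair $(F, G)$, Duplicator plays so as to maintain, round by round, a witness $f \in F$ with $f^*(s) = t$ at the current position $(s, t)$. Initially any $f \in F$ will do, since the coextension of a coKleisli morphism is a root-preserving tree morphism (visible in the explicit formulas for $\Ek$, $\Pk$, $\Mk$), so $f^*(\bot) = \bot$. When Spoiler plays forth with $s' \rcvr s$, Duplicator replies $t' := f^*(s')$; the tree-morphism property gives $t' \rcvr t$, and $f$ continues to witness the invariant. When Spoiler plays back with $t'' \rcvr t$, local invertibility condition~1 on $(f, s)$ supplies $g \in G$ with $g^*(t) = s$; Duplicator replies $s'' := g^*(t'') \rcvr s$, and condition~2 on $(g, t'')$ supplies a new witness $f' \in F$ with $(f')^*(s'') = t''$. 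Because $F \subseteq \SAB$, the invariant forces $(s, t) = (s, f^*(s)) \in \WAB$ at every round, so Duplicator wins.

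\textbf{Backward direction $(2 \Rightarrow 1)$.} Given a winning strategy $\sigma$, I would let $I_\sigma \subseteq \Ck A \times \Ck B$ be the set of positions reachable when Duplicator follows $\sigma$. Winning then gives $I_\sigma \subseteq \WAB$, and soundness of $\sigma$ gives $I_\sigma$ the back-and-forth property: $(\bot, \bot) \in I_\sigma$, and every Spoiler move of either kind from a position in $I_\sigma$ has a response in $I_\sigma$. I would then set $F$ to be the set of coKleisli morphisms $f : \Ck \As \to \Bs$ whose graph is contained in $I_\sigma$, i.e.\ $(s, f^*(s)) \in I_\sigma$ for all $s \in \Ck A$, and define $G$ symmetrically. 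That any such $f$ is in fact a homomorphism follows, as in the proof of Theorem~\ref{EFgamethm}, from the partial-isomorphism content of $\WAB$ applied at the maximum of each prefix-comparable relational tuple in $\Ck \As$. For local invertibility, given $f \in F$ and $s \in \Ck A$, put $t := f^*(s)$; each pair of depth-$i$ prefixes $(s_i, t_i)$ lies in $I_\sigma$, since the $\sigma$-play reaching $(s, t)$ visits $(s_i, t_i)$ as its round-$i$ position. One can therefore construct $g \in G$ by descent on $\Ck B$, using back-closure of $I_\sigma$ at each step and choosing $s_i$ as the back-response at $t_i$, thereby forcing $g^*(t_i) = s_i$ along the prefix chain of $t$. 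This yields $g^*(f^*(s)) = s$; condition~2 is symmetric.

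\textbf{Main obstacle.} The backward direction is the delicate half, and within it the real work lies in producing $f$ and $g$ as \emph{global} coKleisli morphisms rather than responses on the subtree visited by some particular play. For $\Ek$ the domain $A^{\leq k}$ has finite depth and the recursion terminates, but for $\Pk$ the domain $(\kset \times A)^{+}$ is of unbounded depth; assembling a global coKleisli morphism then requires a choice principle, together with a check that the forth- and back-closure of $I_\sigma$ leave enough slack to accommodate the prescribed value on the chosen branch while still supporting consistent responses on all other branches. A subsidiary check is that the constructed $f$ genuinely satisfies the homomorphism condition of $\CS$ under the concrete relational liftings of $\Ek$, $\Pk$, $\Mk$; this reduces to partial-isomorphism preservation along prefix-comparable tuples, which is exactly what the game-winning invariant $\WAB$ encodes.
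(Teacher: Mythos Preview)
The paper does not include a proof of this proposition; it is a survey that states the result and defers the details to \cite{DBLP:conf/csl/AbramskyS18}. So there is no ``paper's own proof'' to compare against, and your attempt must be assessed on its own merits.

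Your forward direction is clean and correct: maintaining the invariant ``some $f \in F$ has $f^*(s) = t$'' and switching witnesses via conditions 1 and 2 of local invertibility is exactly the right idea, and $F \subseteq \SAB$ immediately gives the winning condition.

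For the backward direction, your outline is also essentially right, but two points deserve tightening. First, you spend effort arguing that the constructed $f$ is a \emph{homomorphism}; this is not needed, since the paper defines $\SAB$ as a set of \emph{functions} $f : \Ck A \to B$ satisfying the $\WAB$ condition, not as a set of homomorphisms. Second, your claim that ``winning gives $I_\sigma \subseteq \WAB$'' is not literally what the game definition says: the paper only requires the \emph{terminal} position after $k$ rounds to lie in $\WAB$. For the concrete choices of $\WAB$ (e.g.\ partial isomorphisms for $\Ek$) this is harmless, since membership in $\WAB$ is downward closed along prefixes; but for the abstract parameterised statement you should either assume this closure property of $\WAB$, or argue that any intermediate position can be extended by Spoiler to a terminal one, whence the intermediate position inherits membership in $\WAB$ from the terminal one. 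With that adjustment, your recursive construction of $f$ (forth-closure down the tree) and of the matching $g$ (prescribed along the spine to $t$, back-closure elsewhere) goes through; the choice-principle worry you flag for $\Pk$ is real but standard, and your description of it is accurate.
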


\subsection*{A fixpoint characterization}

Write $S := \SAB$, $T := \SBA$.

Define set functions $\Gamma : \pow(S) \rarr \pow(T)$, $\Delta : \pow(T) \rarr \pow(S)$:  
\begin{align*}
\Gamma(F) & = \{ g \in T \mid \forall t \in \Ck B. \exists f \in F. \, f^* g^* t = t \}, \\
\Delta(G) & = \{ f \in S \mid \forall s \in \Ck A. \exists g \in G. \,  g^* f^* s = s \} .
\end{align*}

These functions are monotone. Moreover, a pair of sets $(F, G)$ is locally invertible iff $F \subseteq \Delta(G)$ and $G \subseteq \Gamma(F)$. 
These conditions in turn imply that $F \subseteq \Delta \Gamma(F)$, and if this holds, then we can set $G := \Gamma(F)$ to obtain a locally invertible pair $(F, G)$. 
Thus existence of a locally invertible pair is equivalent to the existence of non-empty $F$ such that $F \subseteq \Theta(F)$, where $\Theta = \Delta \Gamma$. 

Since $\Theta$ is monotone, by Knaster-Tarski this is equivalent to the greatest fixpoint of $\Theta$ being non-empty. (Note that $\Theta(\varnothing) = \varnothing$).

If $\As$ and $\Bs$ are finite, so is $S$, and we can construct the greatest fixpoint by a finite descending sequence
\[ S \supseteq \Theta(S) \supseteq \Theta^2(S) \supseteq \cdots \]
This fixpoint is non-empty iff $\As \eqbCk \Bs$.

\subsection*{Characterizations of logical equivalences}

\begin{theorem}[\cite{DBLP:conf/csl/AbramskyS18}]
For structures $\As$ and $\Bs$:
\begin{flushleft}
\begin{tabular}{llcl}
(1) & $\As \eqELk \Bs$ & $\; \IFF \;$ & $\As \eqak \Bs$. \\
(2) & $\As \eqLk \Bs$ & $\; \IFF \;$ & $\As \eqbk \Bs$. \\
(3) & $\As \eqLck \Bs$ & $\; \IFF \;$ & $\As \eqck \Bs$.
\end{tabular}
\end{flushleft}
\end{theorem}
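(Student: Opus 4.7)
The plan is to address each of the three equivalences in turn, reducing in each case to a game-theoretic characterization of the corresponding logical fragment, then invoking the classical model-theoretic theorem. The argument runs uniformly over $\Ek$, $\Pk$, and $\Mk$, with only the interpretation of $\Lk$ and the winning condition changing.

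Part (1) follows immediately from Theorem \ref{EFgamethm} and its direct analogues for $\Pk$ and $\Mk$: the existence of a coKleisli morphism $\Ck \As \rarr \Bs$ is equivalent to the preservation from $\As$ to $\Bs$ of every sentence in the existential-positive fragment $\ELk$, so mutual existence (which is exactly $\As \eqak \Bs$) amounts to mutual preservation, i.e.\ $\As \eqELk \Bs$. Part (2) combines Proposition \ref{pisoprop}, which equates $\eqbk$ with a Duplicator win in the back-and-forth $\Ck$-game, with the classical game characterizations of the three logics: after unpacking the encoding of plays as elements of $\Ck A$ and $\Ck B$, this back-and-forth game coincides, up to bookkeeping, with the standard EF-game, $k$-pebble game, or bisimulation-to-depth-$k$ game, whose characterizations of $\eqLk$ are the theorems of Ehrenfeucht--\Fraisse, Immerman--Kolaitis, and Hennessy--Milner respectively.

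Part (3) is the technical heart. A coKleisli isomorphism is a pair of mutually inverse morphisms $f^{*} : \Ck \As \rarr \Ck \Bs$ and $g^{*} : \Ck \Bs \rarr \Ck \As$, and the plan is to extract from this data a winning Duplicator strategy in the \emph{bijective} back-and-forth $\Ck$-game, which classically (via Hella's theorem in the pebble case and its analogues for EF and bisimulation) characterizes $\eqLck$. Using the tree structure on $\Ck A$ together with the counit and coextension equations, one shows that the isomorphism must induce, at each matched position $(s,t)$, a bijection between the covers $s' \rcvr s$ and the covers $t' \rcvr t$ which respects the partial isomorphism condition; conversely, such a compatible family of local bijections assembles back into a coKleisli iso. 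The main obstacle sits precisely here: mere mutual existence of morphisms (as in $\eqak$) or local invertibility (as in $\eqbk$) is strictly weaker than being \emph{the same pair} of morphisms witnessing an isomorphism in $\Kl(\Ck)$, and carefully exploiting the identity laws in $\Kl(\Ck)$ to force level-wise bijectivity, and then matching this to the bijective-game characterization of counting logic, is where the real work lies. Once the pebble case is established, the EF and modal cases follow by the same template.
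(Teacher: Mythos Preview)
The paper itself does not prove this theorem; it is quoted from the cited reference \cite{DBLP:conf/csl/AbramskyS18}, so there is no in-paper proof to compare against. That said, your outline matches the strategy used in that reference: Part~(1) is exactly Theorem~\ref{EFgamethm} (and its $\Pk$, $\Mk$ analogues) applied in both directions; Part~(2) is Proposition~\ref{pisoprop} together with the observation that the abstract back-and-forth $\Ck$-game, once $\WAB$ is instantiated, unwinds to the standard EF / pebble / bisimulation game and hence to the classical Ehrenfeucht--\Fraisse, Barwise/Kolaitis--Vardi, and Hennessy--Milner characterizations; Part~(3) proceeds by showing that the coKleisli identity laws $g^{*} \circ f^{*} = \id$ and $f^{*} \circ g^{*} = \id$ force, at every position $s$, the successor map $a \mapsto f(s[a])$ (respectively $a \mapsto f(s[(p,a)])$ for fixed $p$) to be a bijection $A \to B$, which is precisely the data for a Duplicator strategy in the bijective game, and then invoking the Cai--F\"urer--Immerman / Hella characterization of $\eqLck$.

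One genuine caveat you should make explicit: Part~(3) requires a finiteness hypothesis on $\As$ and $\Bs$ (and the cited paper states it that way). The level-wise bijections you extract from a coKleisli isomorphism force $|A| = |B|$, whereas $\Lck$ with only the counting quantifiers $\exists_{\leq n}$, $\exists_{\geq n}$ cannot separate infinite cardinalities; so for infinite structures the direction $\As \eqLck \Bs \Rightarrow \As \eqck \Bs$ can fail outright. In the modal case the analogous issue is image-finiteness, as in the usual Hennessy--Milner theorem. Your sketch is otherwise sound, but this hypothesis is not optional and should be stated.
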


Note that this is really a family of three theorems, one for each type of game. 
Thus in each case, we capture the salient logical equivalences in syntax-free, categorical form.

\section{Coalgebras and combinatorial parameters}

A beautiful feature of these comonads is that they let us capture crucial combinatorial parameters of structures using the indexed comonadic structure.

Conceptually, we can think of the morphisms $f : \Ck \As \to \Bs$ in the co-Kleisli category for $\Ck$ as those which only have to respect the $k$-local structure of $\As$.
The lower the value of $k$, the less information available to Spoiler, and the easier it is for Duplicator to have a winning strategy.
Equivalently, the easier it is to have a morphism from $\As$ to $\Bs$ in the co-Kleisli category.

What about morphisms $\As \to \Ck \Bs$?
Restricting the access to $\Bs$ makes it \emph{harder} for Duplicator to win the homomorphism game.

Another fundamental aspect of comonads is that they have an associated notion of \emph{coalgebra}. 
A coalgebra for a comonad $(G, \varepsilon, \delta)$ is a morphism $\alpha : A \to G A$ such that the following diagrams commute:
\[
\begin{diagram}
A & \rTo^{\alpha} & G A \\
\dTo^{\alpha} & & \dTo_{\delta_A} \\
G A & \rTo_{G \alpha} & G G A
\end{diagram}  \qquad \qquad
\begin{diagram}
A & \rTo^{\alpha} & G A \\
& \rdTo_{\id_A} & \dTo_{\epsA} \\
& & A
\end{diagram}
\]

We should only expect a coalgebra structure to exist when the $k$-local
information on A is sufficient to determine the structure of A.

Our use of indexed comonads $\Ck$ opens up a new kind of question for coalgebras. Given a structure $\As$, we can ask: what is the least value of $k$ such that a $\Ck$-coalgebra exists on $\As$?  
We call this the \emph{coalgebra number} of $\As$. 
We shall find that for each of our comonads, the coalgebra number is a significant combinatorial parameter of the structure.

\begin{theorem}
\begin{itemize}
\item For the pebbling comonad, the coalgebra number of $\As$ corresponds precisely to the \emph{tree-width} of $\As$.
\item For the Ehrenfeucht-\Fraisse comonad, the coalgebra number of $\As$ corresponds precisely to the \emph{tree-depth} of $\As$.
\item For the modal comonad, the coalgebra number of $(\As, a)$ corresponds precisely to the \emph{synchronization tree depth} of $a$ in $\As$. 
\end{itemize}
\end{theorem}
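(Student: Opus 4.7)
I would handle the three parts in parallel by unpacking what a $\Ck$-coalgebra on $\As$ (or $(\As,a)$) amounts to, and showing that this data is the same thing as the classical decomposition witnessing the claimed parameter.

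A generic first step: for any of the three comonads, a coalgebra $\alpha : \As \to \Ck \As$ is a $\sg$-homomorphism satisfying $\epsA \circ \alpha = \id_\As$ and $\Ck\alpha \circ \alpha = \delta_\As \circ \alpha$. The counit law forces $\alpha(a)$ to be a sequence whose final ``element projection'' is $a$. Coassociativity then forces prefix coherence: if $\alpha(a) = [x_1, \ldots, x_n]$ and $y_i$ denotes the element projection of the $i$-th prefix, then necessarily $\alpha(y_i) = [x_1, \ldots, x_i]$. Consequently the image of $\alpha$ is closed under prefixes and the prefix relation endows it with a rooted forest of height $\leq k$. All remaining content of being a coalgebra is the $\sg$-homomorphism condition imposed on top of this forest.

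For the EF comonad the prefix-forest lives directly on $A$. Using the definition of $R^{\Ek \As}$, the homomorphism condition specialises to: for every tuple $(a_1, \ldots, a_n) \in \RA$, the sequences $\alpha(a_1), \ldots, \alpha(a_n)$ are pairwise prefix-comparable, i.e.~$\{a_1, \ldots, a_n\}$ is a chain in the forest. This is precisely the definition of a tree-depth decomposition of height $\leq k$. The converse, obtained by sending $a$ to its ancestor chain in any given tree-depth decomposition, is immediate, yielding the tree-depth identification.

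The pebbling case is the main obstacle. Here the prefix-forest lives on the image of $\alpha$ inside $\Pk A$ rather than on $A$, so we must extract a decomposition of $\As$ from it. The essential construction is the \emph{active bag} at each node $s$: for each pebble index $p$ that appears in $s$, take the most recent element $a$ with $(p,a)$ occurring in $s$, and collect these into $B_s \subseteq A$, a set of cardinality $\leq k$. The pebble-reuse clause in the definition of $R^{\Pk \As}$ is engineered precisely so that $\alpha$ being a $\sg$-homomorphism is equivalent to the assertion that every tuple in every $\RA$ is contained in some $B_s$; the connectedness axiom of tree decompositions then follows because an element $a$ sits in the active bag exactly for the duration between being pebbled and that pebble being reused. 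The resulting tree-of-bags is a tree decomposition of width $\leq k - 1$. Conversely, given such a decomposition one linearises it in a depth-first fashion, recycling pebble indices as elements drop from the running bag, and reads off $\alpha(a)$ as the prefix ending at the first bag containing $a$; verification that this is a well-defined coalgebra uses the reuse condition in essentially the same way. The delicate bookkeeping linking pebble lifetimes to bag membership is what makes this direction non-routine.

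The modal case is analogous but simpler. A coalgebra $\alpha : (\As, a) \to \Mk(\As, a)$ assigns to each reachable element $b$ a labelled path from $a$ to $b$ of length $\leq k$; prefix coherence makes these paths into a rooted tree of depth $\leq k$, the homomorphism conditions on each $R_\alpha$ force every tree edge to reflect an actual $R_\alpha$-transition in $\As$, and the counit makes the labelling faithful. This is exactly a synchronization tree of depth $\leq k$ rooted at $a$, and conversely any such tree yields a coalgebra by reading off its labelled paths. The three identifications together give the theorem.
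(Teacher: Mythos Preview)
Your treatment of the Ehrenfeucht--\Fraisse and modal cases matches the paper's. For $\Ek$, the paper (Theorem~\ref{fcth}) unpacks the coalgebra axioms exactly as you do: the counit law gives $a_j = a$, coassociativity gives prefix coherence (so $\alpha$ is injective with prefix-closed image), and the homomorphism condition becomes the comparability clause of a forest cover. The modal case is not proved in the paper, but your sketch is the intended one.

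For the pebbling comonad your route is correct but genuinely different from the paper's. You pass directly from a $\Pk$-coalgebra to a tree decomposition by reading off ``active bags'' $B_s$ along the prefix tree, and conversely build a coalgebra from a tree decomposition by a depth-first traversal that recycles pebble indices. The paper instead factors the argument through an intermediate combinatorial notion, the \emph{$k$-pebble forest cover}: a forest order on $A$ together with a pebbling function $p : A \to \kset$ subject to a non-reuse condition along adjacency intervals. It then proves two separate correspondences: (i) $\Pk$-coalgebras $\leftrightarrow$ $k$-pebble forest covers, by the same injectivity/prefix-closure argument as for $\Ek$, the pebble label coming for free from the last move of $\alpha(a)$; and (ii) $k$-pebble forest covers $\leftrightarrow$ tree decompositions of width $<k$, a purely graph-theoretic lemma. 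Your direct route is closer to the cops-and-robbers picture of tree-width and avoids the auxiliary definition; the paper's factorisation buys a cleaner parallel with the $\Ek$ case (both coalgebras induce forest orders on $A$ itself, differing only by the extra pebble datum) and isolates the categorical content (i) from the combinatorial content (ii), which is what makes the analogy between tree-depth and tree-width structurally visible.
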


The main idea behind these results, as we shall now outline, is that coalgebras on $\As$ are in bijective correspondence with decompositions of $\As$ of the appropriate form. 
We thus obtain categorical characterizations of these key combinatorial parameters.

\subsection*{Tree depth and the Ehrenfeucht-\Fraisse comonad}

A graph is $G = (V, {\adj})$, where $V$ is the set of vertices, and $\adj$ is the adjacency relation, which is symmetric and irreflexive.
The comparability relation on a poset  $(P, {\leq})$ is $x \comp y$ iff  $x \leq y$ or $y \leq x$.  A chain in a poset $(P, {\leq})$ is  a subset $C \subseteq P$ such that, for all $x, y \in C$, $x \comp y$.
A \emph{forest} is a poset $(F, {\leq})$ such that, for all $x \in F$, the  set of predecessors  is a finite chain. 
The height $\hgt(F)$ of a forest $F$ is $\sup_{C} | C |$, where $C$ ranges over chains in $F$.
A forest cover for $G$ is a forest $(F, {\leq})$ such that $V \subseteq F$, and if $v \adj v'$, then $v \comp v'$.
The tree-depth $\td(G)$ \cite{nevsetvril2006tree} is defined to be $\min_{F} \hgt(F)$, where $F$ ranges over forest covers of $G$.

Given a $\sg$-structure $\As$, the Gaifman graph $\Gf(\As)$ is $(A, \adj)$, where $a \adj a'$ iff for some relation $R \in \sg$, for some $(a_1, \ldots , a_n) \in \RA$, $a = a_i$, $a' = a_j$, $i \neq j$. The tree-depth of $\As$ is $\td(\Gf(\As))$.

\begin{theorem}
\label{fcth}
Let $\As$ be a finite $\sg$-structure, and $k>0$. There is a bijective correspondence between
\begin{enumerate}
\item $\Ek$-coalgebras $\alpha : \As \rarr \Ek \As$.
\item Forest covers of $\Gf(\As)$ of height $\leq k$.
\end{enumerate}
\end{theorem}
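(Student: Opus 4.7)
The plan is to unfold the coalgebra axioms and recognize that they encode exactly the data of a forest order on $A$, of height at most $k$, whose comparability relation covers the Gaifman adjacencies.

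\textbf{Step 1: what a coalgebra says.} A homomorphism $\alpha : \As \to \Ek \As$ assigns to each $a \in A$ a sequence $\alpha(a) \in \Alk$. The counit law $\epsA \circ \alpha = \id_\As$ forces $\alpha(a)$ to end in $a$, so I write $\alpha(a) = [a_1^a, \ldots, a_{n_a}^a]$ with $a_{n_a}^a = a$ and $n_a \leq k$. Since $\delta_\As$ sends a sequence to the list of its non-empty prefixes, while $\Ek \alpha$ applies $\alpha$ componentwise, the coassociativity law $\delta_\As \circ \alpha = \Ek \alpha \circ \alpha$ unfolds to the prefix-consistency condition: $\alpha(a_i^a) = [a_1^a, \ldots, a_i^a]$ for every $a$ and every $1 \leq i \leq n_a$. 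In particular, $a \mapsto \alpha(a)$ is injective and its image is closed under non-empty prefixes.

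\textbf{Step 2: the associated forest cover.} Define $a' \leq_\alpha a$ iff $\alpha(a') \preford \alpha(a)$. Prefix-consistency implies that the $\leq_\alpha$-predecessors of $a$ are exactly the entries of $\alpha(a)$, listed in prefix-increasing order; thus $(A, \leq_\alpha)$ is a forest in which every chain has length at most $k$. The cover condition follows from $\alpha$ being a homomorphism: any tuple $(a_1,\ldots,a_n) \in \RA$ lifts to $(\alpha(a_1),\ldots,\alpha(a_n)) \in \REk$, and by definition of $\REk$ the sequences $\alpha(a_i)$ are pairwise prefix-comparable, hence the $a_i$ are pairwise comparable in $\leq_\alpha$; in particular any two $\adj$-adjacent vertices are $\comp$-comparable.

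\textbf{Step 3: the reverse construction and bijection.} Given a forest cover $(A, \leq)$ of $\Gf(\As)$ of height $\leq k$, define $\alpha(a)$ to be the strictly increasing enumeration of the chain of $\leq$-predecessors of $a$, ending with $a$. The counit and coassociativity laws hold by construction, and the homomorphism property for each relation $R$ reduces exactly to the cover condition: elements of a tuple in $\RA$ are pairwise $\adj$-related when distinct, hence pairwise $\leq$-comparable by the cover property, hence their $\alpha$-images are pairwise prefix-comparable. These two constructions are mutually inverse: from a coalgebra, the entries of $\alpha(a)$ are precisely the chain of $\leq_\alpha$-predecessors of $a$, so re-enumerating this chain reproduces $\alpha$; from a forest cover, the recovered order $\leq_\alpha$ coincides with the original $\leq$.

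The main obstacle is Step 1: one must carefully match the categorical formulation of coassociativity, involving both $\delta_\As$ and the functorial action of $\Ek$ on $\alpha$, with the combinatorial prefix-consistency condition; once this is done, the rest is bookkeeping. A small subtlety worth addressing is that the stated definition of a forest cover allows $V(\Gf(\As)) \subsetneq F$, so I would note that restricting a cover $(F, \leq)$ to $A$ yields a forest cover on $A$ of no greater height, which identifies the target of the bijection with forest covers where $F = A$.
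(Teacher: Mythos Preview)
Your argument is correct and matches the paper's approach: unpack the counit and coassociativity laws into the last-element and prefix-consistency conditions, transport the prefix order to a forest order on $A$, and use the homomorphism property of $\alpha$ for the cover condition. You in fact go beyond the paper's outline by supplying the inverse construction and verifying bijectivity, and your remark on restricting forest covers to $F = A$ correctly addresses a point the paper leaves implicit.
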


\begin{proof} (Outline)

Suppose that $\alpha : \As \to \Ek \As$ is a coalgebra. For $a \in A$, let $\alpha(a) = [a_1, \ldots , a_j]$. 
\begin{itemize}
\item The first coalgebra equation says that $\alpha(a_i) = [a_1, \ldots , a_i]$, $1 \leq i \leq j$. 
\item The second says that $a_j = a$. 
\end{itemize}
Thus $\alpha$ is an injective map whose image is a prefix-closed subset of $\Alk$. 

Defining $a \leq a'$ iff $\alpha(a) \preford \alpha(a')$ yields a forest order on $A$, of height $\leq k$. 
If $a \adj a'$ in $\Gf(\As)$, the fact that $\alpha$ is a homomorphism implies $a \comp a'$. 

Thus $(A, {\leq})$ is a forest cover of $\As$, of height $\leq k$.
\end{proof}

As an easy consequence, we obtain:

\begin{theorem}
For all finite structures $\As$: $\td(\As) \, = \, \cnE(\As)$.
\end{theorem}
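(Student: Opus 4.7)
The plan is to derive the equality as a direct corollary of Theorem \ref{fcth}, which already does the heavy lifting by setting up the bijective correspondence between $\Ek$-coalgebras on $\As$ and forest covers of $\Gf(\As)$ of height $\leq k$. Given that bijection, the two numerical invariants $\td(\As)$ and $\cnE(\As)$ are each characterized as the least $k$ for which one of the two sides of the bijection is inhabited, and so the equality should drop out immediately by matching these minimizations.

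Concretely, I would argue two inequalities. For $\cnE(\As) \leq \td(\As)$, let $k = \td(\As)$; by definition of tree-depth there exists a forest cover of $\Gf(\As)$ of height $\leq k$, and Theorem \ref{fcth} converts this into an $\Ek$-coalgebra $\alpha : \As \rarr \Ek \As$, so the coalgebra number is at most $k$. For the reverse inequality $\td(\As) \leq \cnE(\As)$, let $k = \cnE(\As)$; then an $\Ek$-coalgebra on $\As$ exists, and the other direction of the bijection in Theorem \ref{fcth} yields a forest cover of $\Gf(\As)$ of height $\leq k$, witnessing $\td(\As) \leq k$.

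The only thing worth flagging is that one should check the boundary case $k = 0$ (or the convention that $\td(\As) \geq 1$ for non-empty $\As$, matching the fact that $\Ek$ is defined for $k > 0$ and sequences in $\Alk$ are non-empty), and confirm that the bijection in Theorem \ref{fcth} is parameterized uniformly in $k$ so that the two minima are taken over the same index set. Neither of these is a real obstacle; the genuine content has been absorbed into Theorem \ref{fcth}, and the present theorem is essentially the observation that characterizing $\Ek$-coalgebras as height-$\leq k$ forest covers of the Gaifman graph is tantamount to characterizing tree-depth as the coalgebra number of the Ehrenfeucht-\Fraisse comonad.
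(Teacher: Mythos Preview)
Your proposal is correct and matches the paper's own treatment: the paper states this theorem as ``an easy consequence'' of Theorem~\ref{fcth}, and your argument spells out exactly that consequence by matching the two minimizations across the bijection.
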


\subsection*{Tree width}

A tree $(T, {\leq})$ is a forest with a least element (the root).
The unique path from $x$ to $x'$ is  the set
$\pth(x, x') := [x \wedge x', x] \cup [x \wedge x', x']$, where we use interval notation: $[y, y'] := \{ z \in T \mid y \leq z \leq y' \}$. 

A tree decomposition of a graph $G = (V, {\adj})$ is a tree $(T, {\leq})$ together with a labelling function $\lbfn : T \rarr \pow(V)$ satisfying the following conditions: 
\begin{itemize}
\item (TD1) for all $v \in V$, for some $x \in T$, $v \in \lbfn(x)$; 
\item (TD2) if $v \adj v'$, then for some $x \in T$, $\{ v, v' \} \subseteq \lbfn(x)$; 
\item (TD3) if $v \in \lbfn(x) \cap \lbfn(x')$, then for all $y \in \pth(x, x')$, $v \in \lbfn(y)$. 
\end{itemize}
The width of a tree decomposition is given by $\max_{x \in T} |\lbfn(x)| -1$. 

We define the tree-width  $\tw(G)$ of a graph $G$ \cite{robertson1986graph} as $\min_{T} \mathsf{width}(T)$, where $T$ ranges over tree decompositions of $G$.
This parameter plays a fundamental role in combinatorics, algorithms and parameterized complexity.

\subsection*{Tree-width and pebbling}

We shall now give an alternative formulation of tree-width which will provide a useful bridge to the coalgebraic characterization. 
It is also interesting in its own right: it clarifies the relationship between tree-width and tree-depth, and shows how pebbling arises naturally in connection with tree-width.

A $k$-pebble forest cover for a graph $G = (V, {\adj})$ is a forest cover $(V, {\leq})$ together with a pebbling function $p : V \to \kset$ such that, if $v \adj v'$ with $v \leq v'$, then for all $w \in (v,v']$, $p(v) \neq p(w)$.

\begin{theorem}
Let $G$ be a finite graph. The following are equivalent:
\begin{enumerate}
\item $G$ has a tree decomposition of width $< k$.
\item $G$ has a $k$-pebble forest cover.
\end{enumerate}
\end{theorem}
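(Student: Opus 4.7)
The plan is to establish the equivalence via explicit constructions in both directions.

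For $(2) \Rightarrow (1)$: Given a $k$-pebble forest cover $(V, \leq, p)$, take the tree $T := V \cup \{\top\}$ with a fresh minimum element $\top$ lying below every forest component (needed only to glue components into a single tree), and for each $v \in V$ set
\[
\lambda(v) \; := \; \{v\} \cup \bigl\{\, w < v \mid p(w) \notin \{p(u) \mid w < u \leq v\} \,\bigr\},
\]
with $\lambda(\top) := \emptyset$. Distinct elements of $\lambda(v)$ carry distinct pebble indices (if $w < w' \leq v$ both lie in $\lambda(v)$ then $w'$ witnesses that $w$'s pebble is not reused, so $p(w) \neq p(w')$), hence $|\lambda(v)| \leq k$. (TD1) is immediate; (TD2) for $v \adj v'$ with $v \leq v'$ is exactly the pebbling axiom rewritten as $v \in \lambda(v')$; and (TD3) holds because any $w \in \lambda(x) \cap \lambda(x')$ forces $x$ and $x'$ to share the forest-ancestor $w$, placing them in one forest component, so for any $y$ on the tree path between them we may assume WLOG $w \leq y \leq x$, whence $(w, y] \subseteq (w, x]$ and the non-reuse witness for $w \in \lambda(x)$ restricts to one for $w \in \lambda(y)$.

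For $(1) \Rightarrow (2)$: Root $T$ and let $\mu(v) \in T$ denote the (by (TD3) unique) root of the connected subtree $T_v := \{x \in T \mid v \in \lambda(x)\}$. Traverse $T$ in DFS preorder; at each node $x$, process the newly introduced vertices $\lambda^-(x) := \lambda(x) \setminus \lambda(\mathrm{parent}(x)) = \{v \mid \mu(v) = x\}$ in any chosen order. For each such $v$: (i) assign $p(v)$ to be any pebble in $\{1,\ldots,k\}$ not already in use by the vertices of $\lambda(x) \setminus \{v\}$ already processed (feasible as $|\lambda(x)| \leq k$), and (ii) declare the forest-parent of $v$ to be the most recently processed vertex of $\lambda(x) \setminus \{v\}$, making $v$ a forest root if that set is empty. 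A straightforward induction shows that at every $x \in T$ the bag $\lambda(x)$ forms a chain in the forest, ordered by introduction time, on which $p$ is injective.

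It remains to verify the pebble forest cover axioms. Suppose $v \adj v'$ in $G$; by (TD2) pick $y \in T$ with $v, v' \in \lambda(y)$, and assume WLOG $\mu(v) \leq \mu(v')$ in $T$. Since $\mu(v), y \in T_v$, (TD3) forces the whole tree-interval $[\mu(v), y]$ into $T_v$; in particular $\mu(v') \in T_v$, so $v \in \lambda(\mu(v'))$. Thus $v$ and $v'$ both lie on the forest-chain $\lambda(\mu(v'))$, with $v'$ its top (most recently introduced) element, giving $v \leq v'$ in the forest. Now write the forest chain from $v'$ down to $v$ as $v' = w_0, w_1, \ldots, w_s = v$, with $w_{i+1}$ the forest-parent of $w_i$. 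By the parent rule $w_{i+1} \in \lambda(\mu(w_i))$, hence $\mu(w_{i+1}) \leq \mu(w_i)$ in $T$; iterating, the sequence $\mu(w_0), \ldots, \mu(w_s)$ is a non-increasing chain in $T$ from $\mu(v')$ down to $\mu(v)$, so every $\mu(w_i)$ lies on the tree path $[\mu(v), \mu(v')] \subseteq T_v$. Hence $v \in \lambda(\mu(w_i))$; together with $w_i \in \lambda(\mu(w_i))$ and bag-injectivity of $p$, this yields $p(w_i) \neq p(v)$ for every $0 \leq i \leq s-1$, which is exactly the pebbling condition. The main obstacle is identifying this forest-parent rule: a more naive chain-ordering of $V$ (e.g.\ by DFS preorder of $\mu(v)$) can place intermediate vertices whose $\mu$-image lies off the tree path $[\mu(v), \mu(v')]$ in sibling subtrees of $T$, where they need not share a bag with $v$, breaking the pebble invariant.
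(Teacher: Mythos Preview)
The paper does not actually prove this theorem; it simply refers the reader to \cite{abramsky2017pebbling}. Your argument is essentially correct and follows the natural strategy one finds there: in one direction build each bag from the ancestors whose pebble is still ``live'', and in the other direction run a DFS over the decomposition tree, introducing vertices at the root of their (connected, by (TD3)) bag-subtree and assigning fresh pebbles against the current bag.

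One small imprecision in $(1)\Rightarrow(2)$: you assert that $v'$ is the top (most recently introduced) element of the chain $\lambda(\mu(v'))$, and use this to conclude $v \leq v'$. This is not true in general---other elements of $\lambda^{-}(\mu(v'))$ may be processed after $v'$, and in the edge case $\mu(v)=\mu(v')$ your WLOG assumption does not decide which of $v,v'$ comes first. The fix is trivial: since $v,v' \in \lambda(y)$ and $\lambda(y)$ is a chain ordered by processing time (your invariant), $v$ and $v'$ are comparable in the forest, and you may take $v \leq v'$ without loss of generality directly. From $v \leq v'$ your descent $v'=w_0,w_1,\ldots,w_s=v$ already gives $\mu(w_{i+1}) \leq \mu(w_i)$, hence $\mu(v) \leq \mu(v')$ as a consequence rather than an assumption, and the rest of the argument (each $\mu(w_i)$ lies on $[\mu(v),\mu(v')]\subseteq T_v$, so $v \in \lambda(\mu(w_i))$ and bag-injectivity gives $p(w_i)\neq p(v)$) goes through unchanged. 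You may also want to note that for a finite $G$ one can take the tree decomposition itself to be finite, so that the DFS preorder is well-defined.
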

For the proof of this result, see \cite{abramsky2017pebbling}.

\subsection*{Treewidth as coalgebra number}

The following correspondence between coalgebras and $k$-pebble forest covers is proved similarly to Theorem~\ref{fcth}.

\begin{theorem}
Let $\As$ be a finite $\sg$-structure. There is a bijective correspondence between:
\begin{enumerate}
\item $\Pk$-coalgebras $\alpha : \As \to \Pk \As$
\item $k$-pebble forest covers of $\Gf(\As)$.
\end{enumerate}
\end{theorem}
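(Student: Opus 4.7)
The plan is to imitate the proof of Theorem~\ref{fcth} for the pebbling comonad, with the lifted-relation conditions of $\Pk$ doing the extra work needed to produce the pebbling function. Throughout, I write $\alpha(a) = [(p_1,a_1),\ldots,(p_j,a_j)]$ for the value of a putative coalgebra at $a$.

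First I would extract the shape of $\alpha$ from the coalgebra axioms, exactly as in the tree-depth case. The counit law $\epsA \circ \alpha = \id_\As$ forces $a_j = a$. The coassociativity law $\delta_\As \circ \alpha = \Pk(\alpha) \circ \alpha$ forces $\alpha(a_i) = [(p_1,a_1),\ldots,(p_i,a_i)]$ for $1 \leq i \leq j$, by the same prefix-chasing argument used for $\Ek$. Hence $\alpha$ is injective with prefix-closed image in $(\kset \times A)^+$. Setting $a \leq a'$ iff $\alpha(a) \preford \alpha(a')$ and $p(a) := p_j$ thus yields a forest order on $A$ together with a pebbling function $p : A \to \kset$, with height automatically bounded by $k$ (though height is not what we need here).

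The main task, and the main obstacle, is to show that the homomorphism condition on $\alpha$ is \emph{exactly} equivalent to the two defining conditions of a $k$-pebble forest cover of $\Gf(\As)$. Take $a \adj a'$ in $\Gf(\As)$, witnessed by a tuple $(a_{i_1},\ldots,a_{i_n}) \in \RA$ containing both $a$ and $a'$. Being a homomorphism, $\alpha$ must send this tuple into $R^{\Pk \As}$, whose definition requires (i) the images to be pairwise prefix-comparable, and (ii) for any two of them $s \preford t$, the pebble index of the last move of $s$ does not appear in the suffix of $s$ in $t$. Condition (i) applied to $\alpha(a)$ and $\alpha(a')$ yields $a \comp a'$, so $(A,\leq)$ is a forest cover; condition (ii), in the case $a \leq a'$, says precisely that $p(a) = p_j$ does not reappear among the pebble indices of $(a_{j+1},\ldots,a_{j'}) = (w)_{w \in (a,a']}$, which is the pebbling condition. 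Conversely, granted these two conditions hold for all adjacent pairs in $\Gf(\As)$, and using that the pebbling condition propagates to every pair drawn from a single chain generated by a relational tuple, we obtain $R^{\Pk \As}$-membership for the image tuple. The delicate point here is verifying (ii) for every pair inside an $n$-tuple, not just for adjacent vertices in $\Gf(\As)$; this follows because all vertices of the tuple are pairwise adjacent in $\Gf(\As)$, so each relevant pair is already constrained by the pebble condition in the forest cover.

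For the reverse direction, given a $k$-pebble forest cover $(A,\leq,p)$, I define $\alpha(a)$ to enumerate the pairs $(p(x),x)$ along the unique chain from a root to $a$. The counit and coassociativity equations hold by construction, and the homomorphism condition translates, by the same two-way dictionary above, into the two defining clauses of a pebble forest cover. Finally, the two constructions $\alpha \mapsto (\leq,p)$ and $(\leq,p) \mapsto \alpha$ are manifestly mutually inverse, because $\alpha$ is determined by the sequence of predecessors and their pebble labels, and conversely the forest order and labelling are read off from the prefix structure of the image of $\alpha$. This gives the required bijection.
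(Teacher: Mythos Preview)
Your argument is correct and is exactly the approach the paper has in mind: it explicitly says the result ``is proved similarly to Theorem~\ref{fcth}'', and you have carried out precisely that adaptation, with the lifted-relation clause for $\Pk$ supplying the pebbling condition.

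One small slip to fix: you write that the forest order obtained from $\alpha$ has ``height automatically bounded by $k$''. That holds for $\Ek$, whose universe is $A^{\leq k}$, but not for $\Pk$, whose universe is $(\kset\times A)^{+}$ with no length bound; here $k$ constrains only the pebble indices, not the sequence length. As you yourself note, no height bound is needed for the statement, so simply delete that parenthetical remark.
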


\noindent We write $\cnP(\As)$ for the coalgebra number of $\As$ with respect to the the pebbling comonad.
The following is a straightforward consequence of the previous two results.

\begin{theorem}
For all finite structures $\As$: $\tw(\As) \, = \, \cnP(\As) - 1$.
\end{theorem}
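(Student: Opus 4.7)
The plan is to chain together the two preceding results: the bijective correspondence between $\Pk$-coalgebras on $\As$ and $k$-pebble forest covers of $\Gf(\As)$, and the equivalence between $k$-pebble forest covers of a graph and tree decompositions of width $< k$. No new combinatorial content is required; the proof is essentially a definition chase.

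First I would unwind the definition of the coalgebra number: $\cnP(\As)$ is the least $k > 0$ such that there exists a $\Pk$-coalgebra $\alpha : \As \to \Pk \As$. Applying the bijection between $\Pk$-coalgebras on $\As$ and $k$-pebble forest covers of the Gaifman graph $\Gf(\As)$, this least $k$ coincides with the least $k$ for which $\Gf(\As)$ admits a $k$-pebble forest cover.

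Next, invoking the equivalence between $k$-pebble forest covers of $\Gf(\As)$ and tree decompositions of $\Gf(\As)$ of width $< k$, this same quantity is the least $k$ such that $\Gf(\As)$ admits a tree decomposition of width $\leq k-1$. By definition $\tw(\As) = \tw(\Gf(\As))$ is the minimum width taken over all tree decompositions, so the least such $k$ is exactly $\tw(\Gf(\As)) + 1$. Rearranging gives $\tw(\As) = \cnP(\As) - 1$, as required.

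The only subtle point is keeping the off-by-one arithmetic straight; it flows entirely from the strict inequality ``width $< k$'' in the pebble-cover/tree-decomposition theorem, which itself reflects the convention that a bag of size $k$ has width $k-1$. Beyond that bookkeeping there is no real obstacle, since both of the heavy lifts (the coalgebra bijection and the pebble-cover characterisation of tree-width) have already been established.
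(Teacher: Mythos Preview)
Your argument is correct and matches the paper's own treatment: the paper states that the result is ``a straightforward consequence of the previous two results,'' and your chaining of the coalgebra/forest-cover bijection with the forest-cover/tree-decomposition equivalence is exactly that consequence. Your handling of the off-by-one is also accurate.
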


\section{Further Developments: Rossman's Theorem}

We shall give a preliminary account of ongoing joint work with Tom Paine which aims to apply structural methods to Rossman's Homomorphism Preservation Theorems \cite{Rossman2008}, which are major results in finite model theory.
The proofs in \cite{Rossman2008} are intricate, and use elaborate towers of  ad hoc definitions. The aim of our work is to give a more structural account of these results, using category theory. By doing so, we make clear how standard constructions are being used, in a manner amenable to generalization.

We shall only discuss the following theorem from \cite{Rossman2008}.
\begin{theorem}[Equirank Homomorphism Preservation Theorem]
A formula is preserved by homomorphisms iff it is equivalent, over all structures, to an existential positive formula of the same quantifier rank.
\end{theorem}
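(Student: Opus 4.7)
The plan is to factor the theorem through the comonadic machinery already developed. The easy direction---that an $\ELk$ formula of quantifier rank $\leq k$ is preserved by homomorphisms---follows by routine structural induction. For the nontrivial direction, let $\vphi$ have quantifier rank $\leq k$ and be preserved by homomorphisms. Define the preorder $\As \preceq_k \Bs$ to hold iff there is a coKleisli morphism $\Ek \As \rarr \Bs$; by Theorem~\ref{EFgamethm} this is precisely the $\ELk$-implication preorder. Over a finite relational vocabulary there are only finitely many rank-$k$ formulas up to equivalence, so it suffices to show that $\Mod(\vphi)$ is upward closed under $\preceq_k$: once this is established, $\vphi$ is equivalent to the (finite) disjunction of $\ELk$-Hintikka-type formulas realized by its minimal models, each of quantifier rank $\leq k$, yielding the desired $\vphi^{+}$.

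The crux of the argument is thus the following \emph{equirank lemma}: whenever $\As \preceq_k \Bs$, there exists a $\sg$-structure $\Cs$ admitting a homomorphism $\As \rarr \Cs$ and an $\Lk$-equivalence $\Cs \eqLk \Bs$. Granting this, the closure under $\preceq_k$ is immediate: if $\As \models \vphi$ and $\As \preceq_k \Bs$, then $\Cs \models \vphi$ by homomorphism preservation (via $\As \rarr \Cs$), and $\Bs \models \vphi$ follows because $\vphi$ has quantifier rank $\leq k$ and $\Cs \eqLk \Bs$. By Proposition~\ref{pisoprop} the $\Lk$-equivalence $\Cs \eqLk \Bs$ can be certified purely in terms of a non-empty locally invertible pair of coKleisli families between $\Ek \Cs$ and $\Ek \Bs$, so the lemma is phrased entirely inside the comonadic setting.

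The main obstacle is of course the construction of $\Cs$. We have two canonical homomorphisms out of $\Ek \As$, namely the counit $\epsA : \Ek \As \rarr \As$ and the given $h : \Ek \As \rarr \Bs$, and the naive attempt---taking a pushout of this cospan---absorbs $\As$ but need not produce an $\Lk$-equivalent of $\Bs$. The plan is to ``saturate'' $\Bs$ using the comonadic structure itself: roughly, replace $\Bs$ with a sufficiently large union (indexed by plays in $\Ek \As$) of isomorphic copies of $\Bs$, glued consistently along $h$, so that any Duplicator strategy in the EF-game on $\Bs$ lifts to one on $\Cs$. The coalgebraic comultiplication $\delta_{\Bs} : \Ek \Bs \rarr \Ek \Ek \Bs$ and the counit fibre structure should then allow us to build the back-and-forth witnesses of $\Cs \eqLk \Bs$ in a quantifier-rank-preserving way. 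Verifying that this saturation is large enough to absorb $\As$ but uniform enough to remain $\Lk$-indistinguishable from $\Bs$---all without inflating the quantifier rank, which is the hallmark of Rossman's equirank result---is where the real difficulty lies, and is the step whose clean categorical reformulation is the primary content of the ongoing work.
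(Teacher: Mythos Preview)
Your reduction to showing that $\Mod(\vphi)$ is upward-closed under $\preceq_k$ is fine and matches the standard shape of the argument. The problem is your ``equirank lemma'': it is false as stated. Take $k = 3$, $\As = C_5$ (the $5$-cycle) and $\Bs = K_2$, viewed as symmetric loopless graphs. Any three vertices of $C_5$ span a triangle-free, hence bipartite, subgraph, and one checks directly that this lets Duplicator win the $3$-round existential EF game; thus $C_5 \preceq_3 K_2$. On the other hand, rank-$3$ first-order logic with equality (which the EF game presupposes) pins down any two-element structure up to isomorphism, so $\Cs \eqLk K_2$ forces $\Cs \cong K_2$ --- and $C_5 \not\to K_2$. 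No ``saturation'' of $\Bs$ can rescue this: here the $\eqLk$-class of $\Bs$ is a singleton, so there is simply no room to absorb $\As$.

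The paper sidesteps this trap by never asking for $\Cs \eqLk \Bs$. Its combinatorial lemma takes the \emph{two-sided} hypothesis $\As \rightleftarrows^{k} \Bs$ and produces a chain $\As \rla \As' \equiv^{k} \Bs' \rla \Bs$, with the elementary equivalence holding only between two \emph{new} structures; homomorphism preservation is then invoked at both ends of the chain. The one-sided upward closure you want follows after a short reduction (replace $\As$ by $\As \sqcup \Bs$, which is $\rightleftarrows^{k}$-equivalent to $\Bs$ whenever $\As \preceq_k \Bs$). The construction of $\As'$ and $\Bs'$ is also quite different from your sketch: each is built independently of the coKleisli map $h$, as the colimit of an $\omega$-chain of wide pushouts (the iterated $\Theta$-construction), the point being to make each structure \emph{$k$-extendable}; the bridge is then the proposition that two $k$-extendable structures with $\As' \rightleftarrows^{k} \Bs'$ already satisfy $\As' \equiv^{k} \Bs'$. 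Your amalgamation instinct is in the right direction, but the target must be relaxed from $\Cs \eqLk \Bs$ to $\As' \equiv^{k} \Bs'$, and the construction must be iterated to a fixpoint on each side separately.
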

Note that this is a weaker result than the finite HPT, since the equivalence is over all structures, not just finite ones.

Rossman makes extensive use of relations $\arX$, $\arnX$, $\barX$, $\barnX$. These are relations between structures $A$, $B$ over some relational vocabulary $\sigma$:
\begin{itemize}
\item $A \arX B$ holds, for a set $X \subseteq A \cap B$, if there is a homomorphism $h : A \to B$ which fixes $X$.
\item $A \arnX B$ holds, if for all finite $C$ of tree-depth $\leq n$ \cite{nevsetvril2006tree}, $C \arX A$ implies $C \arX B$.
\item $A \barX B$ if $A \arX B$ and $B \arX A$; and similarly for $A \barnX B$.
\end{itemize}
We omit the subscript in the case $X = \emp$.

The key result towards the Equirank Homomorphism Preservation Theorem in  \cite{Rossman2008} is the following
\begin{theorem}[Combinatorial Equirank HPT]
For all structures $A$, $B$ with $A \rln B$, there are structures $A'$, $B'$, such that $A \rla A' \equiv^n B' \rla B$.
Here $\equiv^n$ is elementary equivalence with respect to first-order formulas of quantifier rank $\leq n$.
\end{theorem}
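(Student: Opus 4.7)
The natural framework is the Ehrenfeucht-\Fraisse comonad $\En$. I would begin by translating the hypothesis and the conclusion into comonadic language. By Theorem~\ref{EFgamethm}, the relation $A \arn B$---that every finite $C$ of tree-depth $\leq n$ mapping to $A$ also maps to $B$---is equivalent to the existence of a coKleisli morphism $f : \En A \to B$, since the paper's coalgebra-number theorem identifies tree-depth-$\leq n$ structures with the carriers of $\En$-coalgebras. Hence $A \rln B$ provides homomorphisms $f : \En A \to B$ and $g : \En B \to A$, i.e., the existential back-and-forth equivalence for $\En$. Dually, $A' \equiv^n B'$ is equivalent, by the logical-equivalence theorem, to the existence of a non-empty locally invertible pair $(F,G) \subseteq \mathcal{S}(A',B') \times \mathcal{S}(B',A')$ for $\En$. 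The task is then to produce $A', B'$ with $A \rla A'$, $B \rla B'$, and a locally invertible pair witnessing the middle equivalence.

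Next, I would seek a comonadic enrichment that upgrades the existential-positive equivalence to the full back-and-forth equivalence at the price of passing to homomorphism-equivalent structures. A first candidate is an iterated disjoint union: set $A_0 := A$, $B_0 := B$, and inductively $A_{i+1} := A_i \sqcup \En B_i$, $B_{i+1} := B_i \sqcup \En A_i$, with retractions $A_{i+1} \to A_i$ and $B_{i+1} \to B_i$ built from maps of the form $[\id, g]$, $[\id, f]$ and their coextensions $f^*, g^*$. Defining $A' := A_n$ and $B' := B_n$, the homomorphism equivalences $A \rla A'$ and $B \rla B'$ follow inductively from the retractions, so the outer conditions of the chain $A \rla A' \equiv^n B' \rla B$ are secured.

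To establish $A' \equiv^n B'$, I would exhibit a locally invertible pair $(F,G)$ by reading off Duplicator strategies from $f, g, f^*, g^*$: whichever side Spoiler attacks in the back-and-forth $\En$ game, Duplicator responds by mapping into a fresh $\En$-layer on the opposite side, so that after $n$ rounds the position still lies in the partial-isomorphism winning set. The fixpoint characterization via the monotone operator $\Theta = \Delta\Gamma$ reduces the verification of local invertibility to checking that the greatest fixpoint of $\Theta$ on $\mathcal{S}(A', B')$ is non-empty, which is a finite descending computation when $A, B$ are finite.

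The main obstacle is ensuring that the partial-isomorphism condition $\mathsf{W}^{\En}_{A',B'}$ is preserved under \textit{arbitrary} Spoiler plays. The coKleisli morphisms $f, g$ only track existential-positive information, so after a Spoiler move choosing an element in the original $A$-summand, Duplicator's natural response via $f$ does not \textit{a priori} define a partial isomorphism in the backward direction. The iterated $\sqcup \En$-enrichment is designed to provide enough room for Duplicator to always retreat into a layer where the isomorphism condition can be restored, but making this precise---identifying the right iteration depth, the correct identifications between layers, and possibly replacing the naive disjoint union with a more genuinely universal construction, such as a coequalizer relating $g \circ \En g^*$ to the identity, or an Eilenberg-Moore-style completion of the coKleisli picture---is where the hard categorical work lies, and where Rossman's intricate combinatorial witnesses should emerge as structural consequences of the comonadic setup.
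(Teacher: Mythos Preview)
Your comonadic translation of the hypothesis is sound: $A \rln B$ does amount to having coKleisli maps $\En A \to B$ and $\En B \to A$, and the target $A' \equiv^n B'$ is indeed the back-and-forth equivalence $\eqbk$ for $\En$. But from that point on your construction diverges from the paper's, and the specific choices you make do not carry the argument.

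The paper does not intertwine $A$ and $B$. It builds $A'$ from $A$ alone (and $B'$ from $B$ alone) so that $A'$ is \emph{$n$-extendable}: every $\barnK$-equivalence can be promoted along any extension $K \emb K'$ in $\CSn$. The bridge lemma is then that two $n$-extendable structures satisfying $A' \barn B'$ already satisfy $A' \equiv^n B'$; since $A \rla A'$ and $B \rla B'$ preserve $\barn$, the hypothesis $A \rln B$ transfers to $A' \barn B'$ and one is done. The construction of $A'$ is a colimit: first a wide pushout $\EC(A)$ of all arrows $A \emb C \poK A$ indexed by coslice morphisms $(C,h) \in \Approxn(A)$, with $A$ a natural retract of $\EC(A)$; then the $\omega$-colimit of the chain $A \to \EC(A) \to \EC^{2}(A) \to \cdots$ to reach an actual fixpoint, which is what forces extendability.

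Against this, your iterated disjoint union $A_{i+1} := A_i \sqcup \En B_i$ has two concrete defects. First, disjoint union is the wrong colimit: elements in a fresh $\En$-summand are unrelated to elements already played in $A$, so Duplicator's ``retreat into a fresh layer'' breaks the partial-isomorphism condition with respect to earlier moves rather than restoring it. The paper's pushouts $C \poK A$ amalgamate over the already-played part $K$, which is exactly what is needed to keep $\mathsf{W}^{\En}$ intact. Second, stopping after $n$ iterations does not suffice; $\EC(A)$ extends $A$ but is not itself extendable, and one needs the $\omega$-fixpoint. Your closing remark that the naive disjoint union may need to be replaced by ``a more genuinely universal construction'' is on target, but the right replacement is amalgamation (pushout) over the current position, together with the extendability abstraction, rather than a coequalizer or an Eilenberg--Moore completion.
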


The Equirank HPT theorem is a fairly straightforward consequence of this result.

\subsection*{Categorical ingredients}
We find that the main elements of Rossman's argument are built from standard categorical ingredients: coslice categories, comonads, and colimits.
The advantage of this analysis is that we get a more robust version of the construction, which we can adapt e.g.~to the pebbling comonad.

A first  point is that we replace the arrows $\arX$ by arrows in coslice categories:
\begin{center}
\begin{tikzcd}[column sep=small]
& K \ar[dl, rightarrowtail, "e_{1}"'] \ar[dr, rightarrowtail, "e_{2}"] &  \\
A    \ar[rr, "h"'] & & B 
\end{tikzcd} 
\end{center}
In particular, we are interested in objects under injective morphisms $K \emb A$, with $K$ a finite structure in which all the relations have the empty interpretation.

We can then lift the EF comonad to the coslice categories.

\[ \begin{tikzcd}
& K \ar[dl, rightarrowtail, "e_{1}"'] \ar[dr, rightarrowtail, "e_{2}"] &  \\
A    \ar[rr,  "f", "n" very near end] \ar[rr, "g"', leftarrow, shift left=-1ex, "n"' very near start] & & B 
\end{tikzcd} 
\]

This gives us coKleisli maps both ways extending a given partial isomorphism between $A$ and $B$.

\subsection*{Extendability}

A structure $A$ is \emph{$n$-extendable} if for all structures $B$, and $K \emb K' \in \CSn$:
\[ A \barnK B \IMP A \barn_{K'} B . \]
More explicitly, this says
\[ \begin{tikzcd}
& K \ar[dl, rightarrowtail, "e_{1}"'] \ar[dr, rightarrowtail, "e_{2}"] &  \\
A    \ar[rr,  "f", "n" very near end] \ar[rr, "g"', leftarrow, shift left=-1ex, "n"' very near start] & & B 
\end{tikzcd} 
\quad \IMP \quad
\begin{tikzcd}[column sep=large]
& K' \ar[dl, dashrightarrow, "e_{3}"'] \ar[dr, rightarrowtail, "e_{4}"] \ar[r, leftarrowtail, "e"] &  K \ar[d, rightarrowtail, "e_2"] \\
A    \ar[rr,  dashrightarrow, "f'", "n" very near end] \ar[rr, "g'"', dashleftarrow, shift left=-1ex, "n"' very near start] & & B 
\end{tikzcd} 
\]
The dashed arrows are those which are claimed to exist, given the others.

The importance of this notion is that it provides a bridge from the coarse equivalence $\barn$ to the much finer equivalence $\equiv^n$.
\begin{proposition}
If $A$ and $B$ are $n$-extendable, then $A \barn B \IMP A \equiv^n B$.
\end{proposition}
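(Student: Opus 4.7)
The plan is to derive $\As \equiv^n \Bs$ by producing a winning strategy for Duplicator in the $n$-round Ehrenfeucht-\Fraisse game between $\As$ and $\Bs$, then appealing to the Ehrenfeucht-\Fraisse theorem. Duplicator will maintain the inductive invariant that after $i$ rounds, in which pairs $(a_1, b_1), \ldots, (a_i, b_i)$ have been played, there is a structure $K_i \in \CSn$ on fresh generators $\{*_1, \ldots, *_i\}$ with empty relations, embedded in $\As$ by $*_j \mapsto a_j$ and in $\Bs$ by $*_j \mapsto b_j$, and satisfying $\As \barn_{K_i} \Bs$. The base case $K_0 = \emp$ is exactly the hypothesis $\As \barn \Bs$.

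For the inductive step, suppose Spoiler plays $a \in A$ at round $i+1$. Form $K_{i+1}$ by adjoining a fresh generator $*_{i+1}$ to $K_i$, giving $K_i \emb K_{i+1}$ in $\CSn$, and extend the embedding into $\As$ by sending $*_{i+1} \mapsto a$. Now the $n$-extendability of $\Bs$ -- i.e.\ the extendability definition applied with the roles of the two structures swapped -- takes as input the extended embedding into $\As$ (playing the role of the solid arrow $e_4$) and produces a matching extension $K_{i+1} \to \Bs$ (the dashed $e_3$) together with $\As \barn_{K_{i+1}} \Bs$; Duplicator plays the image of $*_{i+1}$ in $\Bs$. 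The symmetric case, when Spoiler plays in $\Bs$, consumes the $n$-extendability of $\As$. Since $n$-extendability is a property of the structures alone, it persists through every round of the induction.

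After $n$ rounds we must show that the played relation $\{(a_j, b_j) \mid 1 \leq j \leq n\}$ is a partial isomorphism. Let $C$ be the substructure of $\As$ induced on $\{a_1, \ldots, a_n\}$. Its carrier has at most $n$ elements, so a linear forest cover witnesses $\td(C) \leq n$, and the inclusion $C \emb \As$ witnesses $C \rightarrow_{K_n} \As$. Applying $\As \rightarrow^n_{K_n} \Bs$ yields a homomorphism $C \to \Bs$ fixing $K_n$, forcing every atomic fact holding of the $a_j$ in $\As$ to hold of the $b_j$ in $\Bs$. The symmetric argument, taking $C$ to be the induced substructure of $\Bs$ on $\{b_1, \ldots, b_n\}$ and invoking $\Bs \rightarrow^n_{K_n} \As$, yields the converse, completing the partial isomorphism and hence Duplicator's win.

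The principal obstacle is this last step: one must choose $C$ so that its carrier coincides with the image of $K_n$, so that a $K_n$-fixing homomorphism out of $C$ actually transfers all the relevant atomic facts, and one needs $|C| \leq n$ to automatically supply $\td(C) \leq n$ so that the defining clause of $\rightarrow^n_{K_n}$ is applicable to $C$. Minor bookkeeping handles the case where Spoiler replays a previously chosen element, by identifying generators in $K_i$ or assuming distinctness without loss of generality, and one should check that the homomorphism produced by extendability, though possibly not injective, still suffices to pick out Duplicator's move.
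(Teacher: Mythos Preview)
The paper states this proposition without proof, so there is no argument in the text to compare against. Your approach --- running the $n$-round Ehrenfeucht--Fra\"{\i}ss\'e game, using extendability of $\As$ and $\Bs$ alternately to supply Duplicator's responses while maintaining the invariant $\As \barn_{K_i} \Bs$, and then testing the final position by feeding the induced substructures on the played elements into the defining clause of $\barn_{K_n}$ --- is the natural one (and is essentially the argument in Rossman's original paper). It is correct.

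One remark on the ``bookkeeping'' you flag at the end. You worry that the map $e_3$ produced by extendability might fail to be injective, so that, say, $b_{i+1} = b_j$ while $a_{i+1} \neq a_j$. In fact the invariant you are carrying already excludes this. If $\As \barn_{K_{i+1}} \Bs$ holds with the given structure maps, let $C$ be the substructure of $\Bs$ induced on $\{b_1, \ldots, b_{i+1}\}$; then $C \to_{K_{i+1}} \Bs$ via the inclusion, and $|C| \leq i+1 \leq n$ gives $\td(C) \leq n$, so $\Bs \to^n_{K_{i+1}} \As$ yields a homomorphism $h : C \to \As$ with $h(b_l) = a_l$ for all $l$. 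Well-definedness of $h$ then forces $a_j = a_{i+1}$, a contradiction. Thus the two structure maps from $K_{i+1}$ always have the same kernel, and injectivity on both sides is preserved automatically. This is exactly the mechanism of your final step, applied one round early; making it explicit would let you drop the hedge in your closing paragraph.
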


\subsection*{The Main Construction}

Given a structure $A$, we shall define a structure $\EC(A)$, as the colimit of the following ``wide pushout'' diagram $D_A$. We have a single copy of $A$, and for every coslice morphism 
\begin{center}
\begin{tikzcd}[column sep=small]
& K \ar[dl, rightarrowtail, "e_{1}"'] \ar[dr, rightarrowtail, "e_{2}"] &  \\
C    \ar[rr, "h"'] & & A 
\end{tikzcd} 
\end{center}
with $(C, h) \in \Approxn(A)$, the arrow $\iKCh : A \emb C \poK A$. Here $\Approxn(A)$ is a ``suitable class'' of homomorphisms with codomain $A$.

While $D_A$ is, \textit{prima facie}, a large diagram, by taking representatives of isomorphism equivalence classes we can make it  small. With more effort, we could even make it finite; however, our final construction will in any case be infinite.

\begin{proposition}
$A$ is a retract of $\EC(A)$. That is, there are homomorphisms $i_A : A \to \EC(A)$ and $r_A : \EC(A) \to A$ such that $r_A \circ i_A = \id_{A}$.

Moreover, $\EC$ extends to a functor, and these retractions and coretractions are natural.
\end{proposition}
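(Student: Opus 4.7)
The plan is to invoke the universal property of $\EC(A) = \mathrm{colim}\, D_A$ twice: first to produce the retraction $r_A$, and then to define the functorial action $\EC(f)$ and verify the two naturality squares for $i$ and $r$.

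First I would construct $r_A$ by exhibiting a cocone from $D_A$ to $A$. The apex component is $\id_A : A \to A$. For each $(C,h) \in \Approxn(A)$ sitting under some $K$ with embeddings $e_1 : K \emb C$ and $e_2 : K \emb A$ satisfying $h \circ e_1 = e_2$, the pair $h : C \to A$ and $\id_A : A \to A$ agrees on $K$ (both composites yield $e_2$), so the universal property of the pushout $C \poK A$ yields a unique mediating map $[h, \id_A] : C \poK A \to A$. Cocone commutativity $[h, \id_A] \circ \iKCh = \id_A$ is precisely the right-hand leg of the pushout square. The universal property of the colimit then produces a unique $r_A : \EC(A) \to A$ with $r_A \circ i_A = \id_A$, which is the required retraction.

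For functoriality, given $f : A \to B$ I would define $\EC(f)$ as the map induced by the cocone from $D_A$ to $\EC(B)$ whose apex component is $i_B \circ f$ and whose component at $C \poK A$ is the composite $C \poK A \to C \poK B \to \EC(B)$. Here the first map is the pushout-mediating map from the canonical inclusion $C \to C \poK B$ and the composite $A \xrightarrow{f} B \to C \poK B$, which agree on $K$ because $f \circ h : C \to B$ is a coslice morphism to $(B, f \circ e_2)$ and hence $(C, f \circ h)$ lies in $\Approxn(B)$ by the closure properties assumed of the class $\Approxn$; the second is the canonical colimit inclusion into $\EC(B)$. The identity law $\EC(\id_A) = \id_{\EC(A)}$ and composition law $\EC(g \circ f) = \EC(g) \circ \EC(f)$ follow from the uniqueness clauses in the pushout and colimit universal properties, and the naturality square $i_B \circ f = \EC(f) \circ i_A$ is immediate from the apex component of the defining cocone.

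For the remaining naturality $f \circ r_A = r_B \circ \EC(f)$, I would show both sides arise from the same cocone from $D_A$ to $B$: the apex components in each case equal $f$, and on each pushout $C \poK A$ the calculation reduces to verifying that both composites restrict on the $C$-summand to $f \circ h$ and on the $A$-summand to $f$; uniqueness in the pushout universal property then forces agreement of the two maps into $B$, and a second uniqueness argument at the colimit identifies the two resulting maps out of $\EC(A)$. The principal obstacle is not depth but bookkeeping: one must pin down precisely what $\Approxn(A)$ consists of, verify it is closed under post-composition by arbitrary homomorphisms so that $(C,h) \mapsto (C, f \circ h)$ is well-defined, and ensure it is small up to isomorphism so that the wide pushout exists in $\CS$. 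Once these conditions are fixed, every step reduces to a routine application of the pushout and colimit universal properties.
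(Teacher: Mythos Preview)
The paper does not actually prove this proposition: it appears in the ``Further Developments'' section, which is explicitly described as ``a preliminary account of ongoing joint work'' with the remark that ``Details of this work will appear in a forthcoming paper.'' There is therefore no proof in the paper to compare your attempt against.

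That said, your sketch is the expected argument and is essentially sound. Constructing $r_A$ via the cocone with apex component $\id_A$ and pushout components $[h,\id_A]$ is exactly right, and the retraction identity drops out immediately from the apex leg. Your treatment of functoriality and naturality via uniqueness in the colimit universal property is likewise standard and correct in outline. One point worth flagging: you correctly isolate the need for $\Approxn$ to be closed under post-composition with arbitrary $f : A \to B$, but the paper also stipulates that the $K$-legs are \emph{embeddings} $K \rightarrowtail A$, and an arbitrary homomorphism $f$ need not preserve injectivity of $e_2$. So the functoriality argument as you have written it implicitly assumes either that the coslice is taken over arbitrary maps out of $K$, or that some device (e.g.\ factoring $f \circ e_2$ through its image) handles the non-injective case. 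Since the paper leaves the precise definition of $\Approxn$ and the coslice setup deliberately informal, this is not a gap in your reasoning so much as an underspecification inherited from the source; but it is the one place where genuine care would be needed in a full proof.
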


\subsection*{Final step}

We are not there yet! 
$\EC(A)$ extends $A$, but is not extendable in itself.
To get an extendable structure with $A$ as a retract, we have to take the \emph{initial fixpoint} of $\EC$ above $A$, given by the colimit of
\[ \begin{tikzcd}
A \ar[r, "i_A"] & \EC(A) \ar[r, "i_{\EC(A)}"] & \EC^2(A) \ar[r, "i_{\EC^2(A)}"] & \cdots
\end{tikzcd}
\]
Doing this yields the Equirank HPT.
Moreover, we can adapt all this to the pebbling comonad, yielding an ``Equivariable HPT'' (in bounded quantifier rank).
The main point is to use a modified notion of extendability. Details of this work will appear in a forthcoming paper.

Further questions:

\begin{center}
\fbox{Can we refine this construction to get \emph{finite} equirank and equivariable HPT's?}
\end{center}

\section{Further Horizons}

While we are not aware of any closely related work, the work of Bojanczyk \cite{bojanczyk2015recognisable} and of Adamek \textit{et al} \cite{urbat2016one} on recognizable languages over monads should be  mentioned.
The aim of these works is to use monads as a unifying notion for the many variations on the theme of recognizability.
Another line of work in a kindred spirit is by Gehrke and her coauthors on a topological and duality-theoretic approach to generalized recognizability, with applications to descriptive circuit complexity \cite{DBLP:conf/icalp/GehrkePR16,DBLP:conf/lics/GehrkePR17}.

There are many opportunities for further development of our approach, which will contribute to relating \Struct~and \Power.
In particular, major algorithmic metatheorems such as Courcelle's theorem \cite{courcelle1990monadic}, and 
decision procedures for guarded fragments \cite{gradel1999decision}, offer promising avenues for progress.

\begin{itemize}
\item Can we combine coalgebras witnessing treewidth/tree model property with coalgebraic descriptions of automata, to obtain structural -- and generalizable -- proofs of these results?

\item Can we embed our coalgebraic semantics in a type theory, allowing the algorithms whose existence is asserted by these results to be extracted by Curry-Howard?

\item The wider issue: can we get \Struct~and \Power~to work with each other to address genuinely deep questions?
\end{itemize}

\subsection*{Envoi}
We hope that the material surveyed in this short note is appropriate in spirit for a volume dedicated to Maurice Nivat.
\begin{itemize}
\item We note firstly that Nivat is an ancestral pioneer both for \Power, in his work on formal languages and automata, and for \Struct, in his work on semantics.
\item Much of his work in both fields concerned words and trees, finite and infinite, and various notions of approximation \cite{nivat1979infinite,arnold1980metric}. Our comonads are, concretely, built out of words (or sequences) and operations on them; while more globally, they have a tree structure. For example, the counit map $\Ek \As \to \As$ ``covers'' the structure $\As$ with a tree-structured model; and the levels $k$ indexing the comonads give suitable notions of approximation.
\item We hope that our work is in the spirit of his concern with the fundamental mathematical structures of computation.
\end{itemize}
We hope, finally,  that he would have approved of the message with which we would like conclude:
\begin{center}
\textbf{\Large Let's not forget to dream!}
\end{center}

\bibliographystyle{alpha}
\bibliography{bibfile}

\newcommand{\etalchar}[1]{$^{#1}$}
\begin{thebibliography}{GHWZ18}

\bibitem[AB11]{abramsky2011sheaf}
Samson Abramsky and Adam Brandenburger.
\newblock The sheaf-theoretic structure of non-locality and contextuality.
\newblock {\em New Journal of Physics}, 13(11):113036, 2011.

\bibitem[ABK{\etalchar{+}}15]{abramsky2015contextuality}
Samson Abramsky, Rui~Soares Barbosa, Kohei Kishida, Raymond Lal, and Shane
  Mansfield.
\newblock Contextuality, cohomology and paradox.
\newblock In {\em 24th {EACSL} Annual Conference on Computer Science Logic,
  {CSL} 2015, September 7--10, 2015, Berlin, Germany}, pages 211--228, 2015.

\bibitem[Abr83]{abramsky1983semantic}
Samson Abramsky.
\newblock On semantic foundations for applicative multiprogramming.
\newblock In {\em International Colloquium on Automata, Languages, and
  Programming}, pages 1--14. Springer, 1983.

\bibitem[Abr14]{DBLP:journals/corr/Abramsky14d}
Samson Abramsky.
\newblock An algebraic characterisation of concurrent composition.
\newblock {\em CoRR}, abs/1406.1965, 2014.

\bibitem[AC04]{abramsky2004categorical}
Samson Abramsky and Bob Coecke.
\newblock A categorical semantics of quantum protocols.
\newblock In {\em Logic in computer science, 2004. Proceedings of the 19th
  Annual IEEE Symposium on}, pages 415--425. IEEE, 2004.

\bibitem[AC08]{abramsky2008categorical}
Samson Abramsky and Bob Coecke.
\newblock Categorical quantum mechanics.
\newblock {\em Handbook of quantum logic and quantum structures: quantum
  logic}, pages 261--324, 2008.

\bibitem[ADW17]{abramsky2017pebbling}
Samson Abramsky, Anuj Dawar, and Pengming Wang.
\newblock The pebbling comonad in finite model theory.
\newblock In {\em Logic in Computer Science (LICS), 2017 32nd Annual ACM/IEEE
  Symposium on}, pages 1--12. IEEE, 2017.

\bibitem[AN80]{arnold1980metric}
Andr{\'e} Arnold and Maurice Nivat.
\newblock {Metric Space of Infinite Trees. Algebraic and Topological
  Properties}.
\newblock {\em FUND. INFO.}, 3(4):445--476, 1980.

\bibitem[AS14]{abramsky2014semantic}
Samson Abramsky and Mehrnoosh Sadrzadeh.
\newblock Semantic unification.
\newblock In {\em Categories and Types in Logic, Language, and Physics}, pages
  1--13. Springer, 2014.

\bibitem[AS18]{DBLP:conf/csl/AbramskyS18}
Samson Abramsky and Nihil Shah.
\newblock Relating structure and power: Comonadic semantics for computational
  resources.
\newblock In {\em 27th {EACSL} Annual Conference on Computer Science Logic,
  {CSL} 2018, September 4-7, 2018, Birmingham, {UK}}, pages 2:1--2:17, 2018.

\bibitem[AW17]{abramsky2017coalgebraic}
Samson Abramsky and Viktor Winschel.
\newblock Coalgebraic analysis of subgame-perfect equilibria in infinite games
  without discounting.
\newblock {\em Mathematical Structures in Computer Science}, 27(5):751--761,
  2017.

\bibitem[Bar84]{barendregt1984lambda}
Hendrik~Pieter Barendregt.
\newblock {\em The lambda calculus}, volume~3.
\newblock North-Holland Amsterdam, 1984.

\bibitem[BFMP17]{baez2017network}
John~C Baez, John Foley, Joseph Moeller, and Blake~S Pollard.
\newblock Network models.
\newblock {\em arXiv preprint arXiv:1711.00037}, 2017.

\bibitem[BG92]{brookes1991computational}
Stephen Brookes and Shai Geva.
\newblock Computational comonads and intensional semantics.
\newblock In P.T.~Johnstone M.P.~Fourman and A.M. Pitts, editors, {\em
  Applications of Categories in Computer Science}, pages 1--40. Cambridge
  University Press, 1992.

\bibitem[Boj15]{bojanczyk2015recognisable}
Miko{\l}aj Boja{\'n}czyk.
\newblock Recognisable languages over monads.
\newblock In {\em International Conference on Developments in Language Theory},
  pages 1--13. Springer, 2015.

\bibitem[BSZ14]{bonchi2014categorical}
Filippo Bonchi, Pawe{\l} Soboci{\'n}ski, and Fabio Zanasi.
\newblock A categorical semantics of signal flow graphs.
\newblock In {\em International Conference on Concurrency Theory}, pages
  435--450. Springer, 2014.

\bibitem[CM77]{chandra1977optimal}
Ashok~K Chandra and Philip~M Merlin.
\newblock Optimal implementation of conjunctive queries in relational data
  bases.
\newblock In {\em Proceedings of the Ninth Annual ACM Symposium on Theory of
  Computing}, pages 77--90. ACM, 1977.

\bibitem[Cou90]{courcelle1990monadic}
Bruno Courcelle.
\newblock {The monadic second-order logic of graphs. I. Recognizable sets of
  finite graphs}.
\newblock {\em Information and computation}, 85(1):12--75, 1990.

\bibitem[GHWZ18]{ghani2018compositional}
Neil Ghani, Jules Hedges, Viktor Winschel, and Philipp Zahn.
\newblock Compositional game theory.
\newblock In {\em Proceedings of the 33rd Annual ACM/IEEE Symposium on Logic in
  Computer Science}, pages 472--481. ACM, 2018.

\bibitem[GPR16]{DBLP:conf/icalp/GehrkePR16}
Mai Gehrke, Daniela Petrisan, and Luca Reggio.
\newblock {The Sch{\"{u}}tzenberger Product for Syntactic Spaces}.
\newblock In {\em 43rd International Colloquium on Automata, Languages, and
  Programming, {ICALP} 2016, July 11-15, 2016, Rome, Italy}, pages
  112:1--112:14, 2016.

\bibitem[GPR17]{DBLP:conf/lics/GehrkePR17}
Mai Gehrke, Daniela Petrisan, and Luca Reggio.
\newblock Quantifiers on languages and codensity monads.
\newblock In {\em 32nd Annual {ACM/IEEE} Symposium on Logic in Computer
  Science, {LICS} 2017, Reykjavik, Iceland, June 20-23, 2017}, pages 1--12,
  2017.

\bibitem[Gr{\"a}99]{gradel1999decision}
Erich Gr{\"a}del.
\newblock Decision procedures for guarded logics.
\newblock In {\em International Conference on Automated Deduction}, pages
  31--51. Springer, 1999.

\bibitem[HN04]{hell2004graphs}
Pavol Hell and Jaroslav N\v{e}set\v{r}il.
\newblock {\em Graphs and homomorphisms}.
\newblock Oxford University Press, 2004.

\bibitem[Hoa80]{hoare1980model}
C.~A.~R. Hoare.
\newblock A model for communicating sequential processes.
\newblock In R.~M. McKeag and A.~M. McNaghten, editors, {\em On the
  Construction of Programs}, pages 229--254. Cambridge University Press, 1980.

\bibitem[Kah87]{kahn1987natural}
Gilles Kahn.
\newblock Natural semantics.
\newblock In {\em Annual Symposium on Theoretical Aspects of Computer Science},
  pages 22--39. Springer, 1987.

\bibitem[KV92]{kolaitis1992infinitary}
Phokion~G Kolaitis and Moshe~Y Vardi.
\newblock Infinitary logics and 0--1 laws.
\newblock {\em Information and Computation}, 98(2):258--294, 1992.

\bibitem[Lan65]{landin1965abstract}
Peter~J Landin.
\newblock An abstract machine for designers of computing languages.
\newblock In {\em Proc. IFIP Congress}, volume~65, 1965.

\bibitem[Lan69]{landin1969program}
Peter~J Landin.
\newblock A program-machine symmetric automata theory.
\newblock {\em Machine Intelligence}, 5:99--120, 1969.

\bibitem[Lib04]{Libkin2004}
Leonid Libkin.
\newblock {\em {Elements of Finite Model Theory (Texts in Theoretical Computer
  Science. An EATCS Series)}}.
\newblock Springer, 2004.

\bibitem[Maz97]{mazurlen1997}
Barry Mazur.
\newblock {An Introduction to the Deformation Theory of Galois
  Representations}.
\newblock In G.~Cornell, Silverman J.H., and Stevens G., editors, {\em Modular
  Forms and Fermat's Last Theorem}. Springer, New York, NY, 1997.

\bibitem[Mog91]{moggi1991notions}
Eugenio Moggi.
\newblock Notions of computation and monads.
\newblock {\em Information and computation}, 93(1):55--92, 1991.

\bibitem[NDM06]{nevsetvril2006tree}
Jaroslav Ne{\v{s}}et{\v{r}}il and Patrice~Ossona De~Mendez.
\newblock Tree-depth, subgraph coloring and homomorphism bounds.
\newblock {\em European Journal of Combinatorics}, 27(6):1022--1041, 2006.

\bibitem[Niv79]{nivat1979infinite}
M~Nivat.
\newblock Infinite words, infinite trees, infinite computations.
\newblock {\em Foundations of computer science III}, 2:3--52, 1979.

\bibitem[Niv80]{nivat1980non}
Maurice Nivat.
\newblock Non deterministic programs: An algebraic overview.
\newblock In {\em IFIP Congress}, volume~80, 1980.

\bibitem[Plo81]{plotkin1981structural}
Gordon~D Plotkin.
\newblock A structural approach to operational semantics.
\newblock Technical report, Computer Science Department, Aarhus University
  Aarhus, Denmark, 1981.

\bibitem[Ros08]{Rossman2008}
Benjamin Rossman.
\newblock {Homomorphism Preservation Theorems}.
\newblock {\em JACM}, 55(3):15, 2008.

\bibitem[RS86]{robertson1986graph}
Neil Robertson and Paul~D. Seymour.
\newblock {Graph minors. II. Algorithmic aspects of tree-width}.
\newblock {\em Journal of algorithms}, 7(3):309--322, 1986.

\bibitem[Sco70]{scott1970outline}
Dana Scott.
\newblock {\em Outline of a mathematical theory of computation}.
\newblock Oxford University Computing Laboratory, Programming Research Group,
  1970.

\bibitem[She71]{shelah1971every}
Saharon Shelah.
\newblock Every two elementarily equivalent models have isomorphic ultrapowers.
\newblock {\em Israel Journal of Mathematics}, 10(2):224--233, 1971.

\bibitem[SS71]{scott1971toward}
Dana~S Scott and Christopher Strachey.
\newblock {\em Toward a mathematical semantics for computer languages}.
\newblock Oxford University Computing Laboratory, Programming Research Group,
  1971.

\bibitem[Tas18]{tasson2018models}
Christine Tasson.
\newblock {\em Models of Distributed, Differential and Probabilistic
  Computation}.
\newblock PhD thesis, Universit{\'e} Paris Diderot-Paris 7-Sorbonne Paris
  Cit{\'e}, 2018.

\bibitem[UACM16]{urbat2016one}
Henning Urbat, Ji{\v{r}}{\'\i} Ad{\'a}mek, Liang-Ting Chen, and Stefan Milius.
\newblock One {Eilenberg} theorem to rule them all.
\newblock {\em arXiv preprint arXiv:1602.05831}, 2016.

\bibitem[UV08]{uustalu2008comonadic}
Tarmo Uustalu and Varmo Vene.
\newblock Comonadic notions of computation.
\newblock {\em Electronic Notes in Theoretical Computer Science},
  203(5):263--284, 2008.

\bibitem[VKS19]{vakar2019domain}
Matthijs V{\'a}k{\'a}r, Ohad Kammar, and Sam Staton.
\newblock A domain theory for statistical probabilistic programming.
\newblock {\em Proceedings of the ACM on Programming Languages}, 3(POPL):36,
  2019.

\bibitem[Wad95]{wadler1995monads}
Philip Wadler.
\newblock Monads for functional programming.
\newblock In {\em International School on Advanced Functional Programming},
  pages 24--52. Springer, 1995.

\bibitem[Weg72]{wegner1972vienna}
Peter Wegner.
\newblock {The Vienna definition language}.
\newblock {\em ACM Computing Surveys (CSUR)}, 4(1):5--63, 1972.

\end{thebibliography}

\end{document}